
\documentclass[11pt]{article}

\usepackage{amsthm, amsbsy, amssymb, amsfonts}
\usepackage{fullpage,times}

\newtheorem{theorem}{Theorem}
\newtheorem{lemma}[theorem]{Lemma}

\newtheorem{claim}[theorem]{Claim}

\newtheorem{remark}[theorem]{Remark}
\newtheorem{definition}[theorem]{Definition}
\newtheorem{proposition}[theorem]{Proposition}

\newcommand{\ceil}[1]{\left\lceil #1 \right\rceil}
\newcommand{\floor}[1]{\left\lfloor #1 \right\rfloor}
\def\E{\mathop{\rm E}}
\newcommand{\defeq}{\stackrel{\Delta}{=}}
\newcommand{\cN}{{\mathcal N}}
\newcommand{\dist}{\mathsf{dist}}
\newcommand{\adv}{\mathsf{adv}}
\newcommand{\reals}{\mathbb{R}}
\newcommand{\cT}{\mathcal{T}}
\newcommand{\rv}{\mathbf{v}}
\newcommand{\cB}{\mathcal{B}}
\newcommand{\cS}{\mathcal{S}}
\newcommand{\cA}{\mathcal{A}}

\newcommand{\hT}{\hat{\mathcal{\cT}}}
\newcommand{\zbar}{\overline{z}}
\newcommand{\xbar}{\overline{x}}
\newcommand{\hv}{\hat{v}}
\newcommand{\hg}{\hat{g}}
\newcommand{\hb}{\hat{b}}
\newcommand{\sign}{\mathsf{sign}}
\newcommand{\depth}{\mathsf{depth}}
\newcommand{\gnd}{\mathsf{gnd}}
\newcommand{\xnd}{\mathsf{xnd}}

\begin{document}

\title{How Hard is Computing Parity with Noisy Communications?\footnote{A preliminary version of this work appeared 
in the Proceedings of the
Nineteenth Annual ACM-SIAM Symposium on Discrete Algorithms, 2008, pp. 1056-1065.}}

\author{Chinmoy Dutta\thanks{Twitter Inc., San Francisco, USA. 
email: {\tt chinmoy@twitter.com}} \\
\and
Yashodhan Kanoria\thanks{Columbia Business School, New York, USA. 
email: {\tt ykanoria@columbia.edu} The work was done while this author was at
Indian Institute of Technology, Mumbai, INDIA.} \\
\and
D. Manjunath\thanks{Indian Institute of Technology, Mumbai, INDIA. 
email: {\tt dmanju@ee.iitb.ac.in}} \\
\and
Jaikumar Radhakrishnan\thanks{Tata Institute of Fundamental Research, Mumbai, INDIA. 
email: {\tt jaikumar@tifr.res.in}}}

\date{}
\maketitle

\begin{abstract} 
\small\baselineskip=9pt
We show a tight lower bound of $\Omega(N \log\log N)$ on the number of
transmissions required to compute the parity of $N$ input bits with constant
error in a noisy communication network of $N$ randomly placed sensors, 
each having one input bit and communicating with others using
local transmissions with power near the connectivity threshold. 
This result settles the lower bound question left open by Ying, Srikant and
Dullerud (WiOpt 06), who showed how the sum of all the $N$ bits can be computed using
$O(N \log\log N)$ transmissions. The same lower bound has been shown to hold for a host of
other functions including majority by Dutta and Radhakrishnan (FOCS 2008).

Most works on lower bounds for communication networks considered mostly the
full broadcast model without using the fact that the communication in real networks 
is local, determined by the power of the transmitters.  In fact, in full broadcast 
networks computing parity needs $\theta(N)$ transmissions. To obtain our lower bound 
we employ techniques developed by Goyal, Kindler and Saks (FOCS 05), who showed 
lower bounds in the full broadcast model by reducing the problem to a model of 
noisy decision trees. However, in order to capture the limited range of transmissions
in real sensor networks, we adapt their definition of noisy decision trees and allow
each node of the tree access to only a limited part of the input. 
Our lower bound is obtained by exploiting special properties
of parity computations in such noisy decision trees.
 
\end{abstract}

\section{Introduction}
\label{sec:intro}

Since inexpensive wireless technology and sensing hardware have become
widely available and are heavily used, much recent effort has been
devoted to developing models for these networks and protocols based on
these models. A wireless sensor network consists of sensors that
collect and cooperatively process data in order to compute some global
function. The sensors interact with each other by transmitting
wireless messages based on some protocol. The protocol is required to
tolerate errors in transmissions since wireless messages typically are 
noisy.

In the problem we study, each sensor is required to detect a bit;
then, all the sensors are required to collectively compute the parity
of these bits. The difficulty of this task, of course, depends on the
noise and the connectivity of the network. In this paper, we assume
that each bit sent is flipped (independently for each receiver) with
probability $\epsilon>0$ during transmission. As for connectivity, we
adopt the widely used model of random planar networks. Here the
sensors are placed randomly and uniformly in a unit square. Then each
transmission is assumed to be received (with noise) by the sensors
that are within some prescribed radius of the sender. The radius is
determined by the amount of power used by the sensors, and naturally
one wishes to keep the power used as low as possible, perhaps just
enough to ensure that the entire network is connected. If the network 
is not connected then it cannot be expected to compute a function like 
parity which depends on all the input bits. It has been
shown by Gupta and Kumar~\cite{Gupta00} that the threshold radius for
connectivity is $\theta\left(\sqrt{\frac{\ln N}{N}}\right)$ for a random
planar network of $N$ sensors placed in a unit square. With a
radius much smaller than this the network will not be connected almost
surely, and with radius much larger it will be connected almost
surely.

Our work is motivated by a protocol presented by Ying, Srikant and
Dullerud~\cite{Ying06} for computing the sum of all the bits (and hence any
symmetric functions of these bits).  They showed that even with
radius of transmission just near the connectivity threshold, and
constant noise probability, one can compute the sum using a total of
$O(N \log \log N)$ transmissions. They observed the (trivial) lower
bound of $N$ transmissions (for every sensor must send at least
one message), but left open the possibility of better upper bounds.
One can compute the parity of the input bits from their sum; in fact, Ying et al.
suggested that parity computation might be significantly easier than computing the sum.  
In this work, we prove a lower bound showing that the protocol of Ying et al.
is optimal up to constant factors for computing the parity (and hence, also the sum)
of the input bits. In order to state our result formally we need to define the model of noisy 
communication networks.
 
\begin{definition}[Noisy communication network and protocol] \hspace{0.2in}
\label{def:noisy-network-protocol}
A communication network is an undirected graph $G$ whose vertices
correspond to sensors and edges correspond to communication links.
A message sent by a sensor is received by all its neighbors.
\begin{description}
\item[Noise:] In an $\epsilon$-noise network, the messages are
subjected to noise as follows. Suppose sensor $v$ sends bit $b$ in
time step $t$. Each neighbor of $v$ then receives an independent
noisy version of $b$; that is, the neighbor $w$ of $v$ receives the
bit $b \oplus \eta_{w,t}$, where $\eta_{w,t}$ is an $\epsilon$-noisy
bit (that takes the value $1$ with probability $\epsilon$ and $0$
with probability $1-\epsilon$), these noisy bits being mutually
independent for different neighbors.

\item[Input:] An input to the network is an assignment of bits to the sensors, 
and is formally an element of $\{0, 1\}^{V(G)}$.
 
\item[Protocol:]  A protocol on $G$ for computing a function $f:\{0,1\}^{V(G)}
\rightarrow \{0,1\}$ works as follows. The sensors take turns to send single 
bit messages, which are received only by the neighbors of the sender.
In the end, a designated sensor $v^* \in V(G)$ declares the
answer. The cost of the protocol is the total number of bits
transmitted.  A message sent by a sensor in some time step is a function of the bits
that it possesses, which include its input bit and the noisy copy 
of the bits transmitted by its neighbors until then.  The protocol with cost $T$ is thus
specified by a sequence of $T$ vertices $\langle v_1,v_2,\ldots,v_T\rangle$ and a 
sequence of $T$ functions $\langle g_1,g_2,\ldots,g_T \rangle$, where 
$g_t:\{0,1\}^{j_t} \rightarrow \{0,1\}$ and $j_t$ is the number of bits possessed by 
$v_t$ before time step $t$. Furthermore, $v_T=v^*$, and the final answer is obtained by computing $g_T$. 
Note that in our model the number of transmissions is the same for all
inputs.

\item[Error:] Such a protocol is said to be a $\delta$-error protocol,
if for all inputs $x \in \{0,1\}^{V(G)}$, $\Pr[\mathsf{output} = f(x)] \geq 1-\delta$.
Here the probability is over the noise in the communication channel as well as the
internal randomness, if any, of the protocol.
\end{description}
\end{definition}

In this paper, we consider networks that arise out of random placement 
of sensors in the unit square.

\begin{definition}[Random planar network] \hspace{0.2in}
\label{def:random-planar-network}
A random planar network $\cN(N,R)$ is a random variable whose values
are undirected graphs. The distribution of the random variable depends
on two parameters: $N$, the number of vertices, and $R$, the
transmission radius.  The vertex set of $\cN(N,R)$ is $V(\cN) = \{P_1, P_2,
\ldots , P_N\}$. The edges are determined as follows. First, these
vertices are independently placed at random, uniformly in the unit
square $[0,1]^2$. Then,
\[ E(\cN) = \{ (P_i,P_j):  \dist(P_i,P_j) < R\},\]
where $\dist(P_i,P_j)$ is the Euclidean distance between vertices $P_i$ and $P_j$.
\end{definition}

The result in this paper is the following.

\begin{theorem}[Lower bound for parity] \hspace{0.2in}
\label{thm:lb-parity}
Let $R \leq N^{-\beta}$ for some $\beta >0$.  
Let $\delta < \frac{1}{2}$ and $\epsilon \in (0,1)$. Then, 
with probability $1-o(1)$ over the random variable $\cN(N,R)$, every 
$\delta$-error protocol on $\cN(N,R)$ with $\epsilon$-noise for computing 
the parity function $\oplus: \{0, 1\}^{V(\cN)} \rightarrow \{+1,-1\}$ 
requires $\Omega(N \log \log N)$ transmissions.
\end{theorem}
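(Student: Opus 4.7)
I would prove Theorem~\ref{thm:lb-parity} in three stages: extracting the local structure of the random planar network; reducing a noisy protocol on that network to a localized variant of noisy decision trees; and proving an $\Omega(N\log\log N)$ lower bound for parity in this decision tree model.

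\emph{Stage 1 (Geometry).} I would tile the unit square by axis-parallel cells of side $\Theta(R)$. A Chernoff and union bound argument shows that, with probability $1-o(1)$, every cell contains $\Theta(NR^2)$ sensors, and every transmission is heard by only $O(NR^2) = O(N^{1-2\beta})$ other sensors, all lying in the sender's cell or one of a constant number of adjacent cells. I would condition on this high-probability event throughout the remainder of the argument; note that for $\beta>1/2$ the cells have $o(1)$ expected occupancy, in which case one instead uses enlarged cells of side $\Theta(\sqrt{\log N/N})$ to recover the same local structure in the interesting (barely connected) regime.

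\emph{Stage 2 (Reduction to a local noisy decision tree).} Following the Goyal--Kindler--Saks template, I would convert a $T$-transmission protocol into a noisy decision tree of $O(T)$ queries (each reading a noisy copy of a single input bit) that computes the same function with the same error. The new ingredient forced by the geometry of $\cN(N,R)$ is that, along any root-to-leaf path of the tree, the input bits that can be queried lie in a slowly expanding geographic region: once a given sensor's bit appears on a path, subsequent queries on that path are restricted to bits in nearby cells. This yields the ``restricted access'' noisy decision tree foreshadowed in the abstract.

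\emph{Stage 3 (Lower bound for parity in a local tree).} Switching to $\pm 1$ encoding, the error hypothesis translates to $\bigl|\E[\mathrm{out}\cdot\prod_v x_v]\bigr|\geq 1-2\delta$. Expanding $\E[\mathrm{out}]$ as a sum over root-to-leaf paths and applying the noise operator with parameter $\rho=1-2\epsilon$, each path's contribution to the top Fourier character $\prod_v x_v$ is damped by a factor of $\rho^{k_v}$ for every variable $x_v$ queried $k_v$ times on that path; variables untouched by the path contribute nothing to $\prod_v x_v$. For the aggregate correlation with parity to be $\Omega(1)$, each of the $N$ variables must be ``resolved'' by a sufficient collection of paths, and the local-envelope constraint from Stage~2 prevents a family of paths with total length $o(N\log\log N)$ from doing so. Carrying out this Fourier and path-counting accounting gives the desired $\Omega(N\log\log N)$ lower bound, matching Ying--Srikant--Dullerud up to constants.

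\emph{Main obstacle.} Stage~3 is where the proof is genuinely delicate. A naive per-bit amplification bound yields only $\Omega(N)$, whereas the standard ``product of independent per-bit estimators'' heuristic would erroneously suggest $\Omega(N\log N)$ -- which contradicts the known $O(N\log\log N)$ upper bound and so reveals that this heuristic is not applicable here. Extracting exactly the $\log\log N$ factor requires simultaneously exploiting that parity depends on every bit (so only the top Fourier character matters) and the locality envelope (which prevents small subtrees from economically aggregating faraway bits). Making these two properties combine cleanly, in a way that preserves precisely the $\log\log N$ factor without over- or under-shooting, is the main conceptual and technical hurdle.
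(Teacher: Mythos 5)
Your proposal captures the right high-level intuition — use geometry to localize, reduce to a noisy decision tree, then do per-bit amplification accounting — but it misses the two structural ideas that actually make the paper's argument go through, and as a result your Stage~3 is a wish rather than a proof.

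\textbf{Missing idea 1: buffer zones and the cluster decomposition.} The paper does not merely tile the square into $\Theta(R)$-cells. It selects a widely-spaced subset of cells (every third in each direction), designates the sensors in those cells as \emph{input} nodes, and fixes the inputs of all other sensors to $0$, making them \emph{auxiliary}. This gives an $(n,k)$-decomposition with $n = \Theta(NR^2)$ inputs per cluster and $k = \Theta(1/R^2)$ clusters, in which the neighborhood of each input cluster $I_j$ lies entirely within $I_j \cup A_j$, and the $A_j$'s are disjoint across clusters. Your Stage~2 instead appeals to a ``slowly expanding geographic region'' constraint on root-to-leaf paths. That constraint is essentially vacuous here: the protocol runs for $T = \Omega(N)$ steps, and in $O(\mathrm{diam}/R) = O(1/R) = O(N^\beta)$ steps information from any sensor can reach the whole square. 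The locality you need is not temporal spreading but a structural separation of input supports, which is exactly what the buffer-zone decomposition provides and your tiling does not.

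\textbf{Missing idea 2: tree rearrangement and the product bound, which is where parity enters.} The paper's $\xnd$ trees query an entire cluster's input block $x_j \in \{0,1\}^n$ at each node, not a single bit. Because parity factors as $\oplus(x) = \prod_j \oplus(x_j)$ and the input distribution is a product $\mu^k$, the paper proves a rearrangement lemma (Lemma~\ref{lem:tree-ordering}): any oblivious $\xnd$ tree can be reordered so all queries to the same cluster are consecutive, with no loss in advantage. Collapsing consecutive queries yields a \emph{read-once} $(n,k)$-decision tree, and then the product property (Lemma~\ref{lem:read-once-tree-advantage}) gives $\adv_{\oplus,\mu^k}(\cT) \leq \alpha^k$ where $\alpha$ bounds the advantage of any single cluster query. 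This is the engine that produces $\log\log N$: the constraint $1 - 2\delta \leq \alpha^k$ with $k = \Omega(N^{2\beta})$ forces each cluster's error $1-\alpha$ to be polynomially small, and plugging this into the Goyal--Kindler--Saks bound for a single $\gnd$ tree (which degrades \emph{exponentially} in the per-node query budget $d = O(T/N)$, because of the $\epsilon^{d}$ noise strengthening) yields $T/N = \Omega(\log\log N)$. Your Stage~3 ``Fourier and path-counting accounting'' never identifies any mechanism that produces a $\log\log$ rather than $\log$ or constant factor, and you concede as much. Without the decomposition-plus-product-plus-rearrangement structure there is no obvious way to make it do so.

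\textbf{A smaller point on Stage 1.} Your fallback to enlarged cells for $\beta > 1/2$ is an unnecessary detour: if $R = o(\sqrt{\log N / N})$ the network is disconnected with high probability and admits no $\delta$-error protocol for parity at all, so the theorem holds vacuously in that regime. The substantive case is $R = \Theta(\sqrt{\log N / N})$ up to $R = N^{-\beta}$, where the $\Theta(R)$-cells have $\Omega(\log N)$ occupancy.

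In short: the geometry in your Stage~1 is roughly right but incomplete (no buffers, no auxiliary nodes), your Stage~2 locality invariant does not actually hold for the protocols in question, and your Stage~3 names the difficulty without resolving it. The paper resolves it precisely by (a) fixing buffer inputs to create independent clusters, (b) exploiting that parity is a product function to reorder the $\xnd$ tree into a read-once block tree, and (c) applying the GKS $\gnd$-tree bound to each block, with the $k$-fold product yielding the $\log\log N$ factor.
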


\begin{remark}
Our definition of noise assumes that all transmissions are subjected
to noise with probability exactly $\epsilon$. In the literature, other 
models of error have been considered. Some protocols work even
in the weaker model where this probability is at most $\epsilon$.  Our
lower bound holds for the stronger model with the noise parameter being exactly $\epsilon$, 
and hence is also applicable to the weaker model. 
\end{remark}

\begin{remark}
We require only an upper bound on the transmission radius.
However, the result is meaningful only when $R =\Omega(\sqrt{\frac{\log N}{N}})$, 
for otherwise, with high probability, the network is not connected and cannot be 
expected to compute any function that depends on all its input bits.
\end{remark}

\begin{remark}
Trivially, this lower bound also holds for computing the sum of the input bits. 
\end{remark}

\subsection{Related work}
The most commonly studied noisy communication model allows full
broadcasts, that is, all sensors receive all messages (with
independent noise). In this model, Gallager~\cite{Gallager88} considered the
problem of collecting all the bits at one sensor, and showed how this
could be done using $O(N \log \log N)$ transmissions; this implies the
same upper bound for computing any function of the input bits. More
recently, in a remarkable result, Goyal, Kindler and Saks~\cite{Goyal05}
showed that Gallager's protocol was the best possible for collecting
all the bits.  However, they do not present any boolean function for
which $\Omega(N \log \log N)$ transmissions are required.

In the full broadcast model, protocols for computing specific
functions have also been studied in the literature. Feige and
Raghavan~\cite{Feige00} presented a protocol with $O(N \log ^* N)$
transmissions for computing the OR of $N$ bits; this result was
improved by Newman~\cite{Newman04}, who gave a protocol with $O(N)$
transmissions. For computing threshold functions Kushilevitz and
Mansour~\cite{Kushilevitz98} showed a protocol with $O(N)$ transmissions, assuming
that all messages are subject to noise with probability exactly
$\epsilon$. Under the same assumption, Goyal, Kindler and Saks~\cite{Goyal05}
showed that the sum of all the bits (and hence all symmetric
functions) could be computed with $O(N)$ transmissions.

In this paper we are concerned with networks arising from random
placement of sensors, where considerations of power impose stringent
limits on the transmission radius. In this model, Ying,
Srikant and Dullerud~\cite{Ying06} presented a protocol for computing the sum of all the bits as
mentioned above. Kanoria and Manjunath~\cite{Kanoria07} gave a protocol that uses $O(N)$
transmissions to compute the OR function. However, no non-trivial
lower bound that apply specifically to communication networks with limited
transmission radius had appeared in the literature before this work. Subsequent to the 
initial presentation of this work~\cite{Dutta08a}, Dutta and Radhakrishnan~\cite{Dutta08b} showed that 
the same lower bound of $\Omega(N \log \log N)$ holds for computing a host of boolean functions 
including the majority function.

\subsection{Techniques}

We now present an overview of the proof technique used to derive our
lower bound.  As we explain in more detail in the Section~\ref{sec:lb-parity}, the
proof has two parts. The first part is geometric. Since the transmission 
radius is limited, it is possible to decompose the nodes of the communication
network into clusters. The nodes in the interior of each cluster will
continue to receive inputs and will be called {\it input nodes}, but those on the 
boundary will have their inputs fixed (arbitrarily) and thereby become 
{\it auxiliary nodes} that still participate in the protocol by sending and receiving
messages. This decomposition of the communication network 
into clusters ensures that any node can receive transmissions from input nodes of at most
one cluster. This allows us to view the protocol as a combination of several subprotocols 
acting on different clusters and interacting with each other via the auxiliary nodes.
This graph theoretic decomposition is based on routine
arguments involving the distribution points chosen independently and uniformly at random 
on the unit square. 

The second part of the proof is combinatorial and concerns arguing that the subprotocols
acting on different clusters of the decomposed network can be assumed to be independent of each other.
This part is not straightforward and we need to revisit the arguments
used by Goyal, Kindler and Saks~\cite{Goyal05} to obtain their lower bounds. 
A key insight in their proof was that protocols in noisy communication networks
could be translated into what they called {\it Generalized Noisy Decision trees} ($\gnd$ trees). 
We adapt their argument to our setting. For us it is important
to ensure that the decomposition of the network (which was the
consequence of the limited transmission radius) is reflected in the
noisy decision trees we construct. So, we define a notion of noisy
decision trees appropriate for our setting, where we allow each node of the tree
access to the inputs of only one cluster. We show how efficient
protocols on decomposed networks can be translated to such decision
trees of small depth. 

The argument this far was general and did not use the fact that the
ultimate goal of the protocol is to compute the parity function. 
Next we show that we can rearrange the decision tree so that
the queries made to the variables in the same cluster of the decomposition
appear at adjacent levels of the tree. This part crucially depends on
the fact that we are trying to compute the parity function. After the
rearrangement, we can view the entire computation as a sequence of
noisy decision tree computations, one for each cluster.  We conclude
that in order to have low overall error, the computation in each
cluster must have vanishingly small error probability.  At this stage
we can directly apply a result of Goyal, Kindler and Saks~\cite{Goyal05}, which
states that any decision tree that computes the parity function with
error $o(1)$ must have superlinear depth.  This dependence of depth on
error is strong enough to yield our lower bound. 

The interesting feature of this argument is that we work with
appropriately defined decision trees instead of directly with the
decomposed protocol.  Once inputs of processors have been fixed, they
become auxiliary. However, they continue to participate in the
protocol. In particular, they receive transmissions from processors
with inputs and can potentially aid error correction by providing
additional reception diversity, which is crucially exploited in many
of the upper bounds. So it is not true that our decomposition
immediately breaks the protocol into independent subprotocols,
operating separately on different clusters. Nevertheless, when we
translate the decomposed protocol into our model of decision trees, we
can view the computation of the entire decision tree as a combination
of independent decision subtrees, operating separately on different
clusters. This provides us the required product property, from which
one easily deduces that each individual subtree must compute the
parity within its cluster very accurately. For an detailed discussion 
of this technique as well as those developed to analyze functions 
where we do not have the product property, we refer the reader to the 
Phd thesis~\cite{Dutta09}.

\subsection{Organization of the paper}
Section~\ref{sec:prelims} presents some definitions and notations. 
In Section~\ref{sec:lb_parity}, we state two lemmas corresponding to the two
parts of the argument, and derive the lower bound for parity.  
The details of the first part of the argument are presented in 
Secction~\ref{sec:network_decomposition}. The second part of the argument is spread
over Sections~\ref{sec:protocol_to_read_once_tree} and~\ref{sec:read_once_tree_advantage}. 
We conclude the paper in Section~\ref{sec:conclusions}.
\section{Preliminaries}
\label{sec:prelims}

In our proof, some of the nodes in the network will receive no input. 
We now introduce the terminology applicable in such situations.

\begin{definition} [Input and auxiliary nodes] \hspace{0.2in}
Let $G = (V,E)$ be a communication network. We partition the set of nodes, $V$, into 
the set of input nodes, $I$, and the set of auxiliary nodes, $A$. Nodes in $I$ receive inputs
and those in $A$ do not receive any input but have their input bits fixed arbitrarily.
An input to such a network is an element of $\{0,1\}^I$ and a protocol on such a network 
computes a function $f: \{0,1\}^I \rightarrow \{0,1\}$.
\end{definition}

Next we formalize the notions of network decomposition and bounded protocols on such decomposed networks.

\begin{definition}[Network decomposition and bounded protocols] \hspace{0.2in}
\label{def:network-decomposition}
Let $G=(I \cup A, E)$ be a communication network. An $(n,k)$-decomposition
of $G$ is a partition of the set of nodes of $G$ of the form $I = I_1 \cup 
\cdots \cup I_k$ and $A = A_0 \cup A_1 \cup \cdots \cup A_k$ such that for $j=1,\ldots,k$,
\begin{enumerate}
\item[(P1)] $|I_j| = n$, and
\item[(P2)] the neighborhood of $I_j$ is contained in $I_j \cup A_j$.
\end{enumerate}
A protocol $\Pi$ on $G$ is said be a $(d,D)$-bounded
protocol with respect to the decomposition $\langle A_0,(I_j,A_j): j=1,\ldots,k\rangle$ if
for $j = 1, \ldots, k$,
\begin{enumerate}
\item[(P3)] a node in $I_j$ makes at most $d$ transmissions, and
\item[(P4)] all nodes in $I_j \cup A_j$ put together make at most $D$ transmissions.
\end{enumerate}
We use the notation $\epsilon$-noise $(n,k,d,D)$-protocol to mean a $(d,D)$-bounded
protocol for some $(n,k)$-decomposed network with noise parameter $\epsilon$.
\end{definition}

As stated earlier, we will use the method of Goyal, Kindler and Saks~\cite{Goyal05} 
to translate a communication protocol into a noisy decision tree. 
We now present the terminology for noisy decision trees.

\begin{definition}[Decision tree]  \hspace{0.2in}
\label{def:decision-tree}
Let $S$ be an arbitrary set and $k$ be a positive integer. 
A decision tree $\cT$ for the set of inputs $S^k$ is a balanced tree where
each internal node $v$ is labelled by a pair $\langle i_v, g_v
\rangle$ where $i_v \in [k]$, $g_v: S \rightarrow C_v$, and
$C_v$ is the set of children of $v$. We call the tree to be a {\bf noisy} decision tree
if the functions $g_v$ are noisy. A noisy function is one whose output depends on its input 
as well as some internal randomness. Such a tree $\cT$ computes a
function from $S^k$ to the set $L(\cT)$ of leaves of $\cT$ as follows:
on input $\langle{x_1,x_2,\ldots,x_k}\rangle \in S^k$, the computation starts
at the root and determines the next vertex to visit after a vertex $v$
by evaluating $g_v(x_{i_v})$; the leaf reached in the end is the
result of the computation.  If a vertex $i_v = i$ for a vertex $v$, 
then we say that the $i$-th input variable is queried at that vertex.  
We say that the decision tree is {\bf oblivious} if the label $i_v$ of a
vertex $v$ depends only on the level of $v$ (distance from the root).  We say that
an oblivious decision tree is {\bf ordered} if for all $j\in [k]$ all queries to the
the $j$-th input variable appear at consecutive levels. We say that an oblivious decision tree 
is {\bf read-once} if each input variable is queried exactly once.
\end{definition}
  
\begin{remark}
We use the notation $(n,k)$-decision tree to refer to a decision tree for
inputs in $S^k$ where $S = \{0,1\}^n$. 
\end{remark}

\begin{remark}
\label{rem:ordered-read-once}
A read-once decision tree is obviously ordered. Also, an ordered decision tree can be easily made read-once
by collapsing consecutive queries to the same variable into one supernode.
\end{remark}

As in \cite{Goyal05}, in order to capture the noise in a noisy communication network, we define
a special kind of noisy decision tree, {\it Xored-Noise Decision tree} ($\xnd$-tree). 
Here we allow each of the the functions $g_v$ access to its input variable xored with some noise variable.
These noise variables are set according to some distribution based on a noise parameter $\epsilon$, 
but independent of the input.

\begin{definition}[$\xnd$ tree] \hspace{0.2in}
\label{def:xnd-tree}
An $(n,k,D,\epsilon)-\xnd$ tree $\hT$ is an $(n,k)$-noisy decision tree.
It consists of an oblivious decision tree $\cT$ on inputs $S^k$
where $S= \{0,1\}^n \times (\{0,1\}^n)^{|\Lambda|}$ (for some index
set $\Lambda$), and each function $g_v$ has a special form:
\[ g_v(x_{i_v},\zbar_{i_v}) = g'_v(x_{i_v} \oplus z_{{i_v},{\lambda_v}}),\]
for some $g_v': \{0,1\}^n \rightarrow C_v$ and $\lambda_v \in
\Lambda$.  Each input is queried at most $D$ times in the tree.  The
computation of $\hT$ proceeds as follows: on input $x \in
(\{0,1\}^n)^k$, each $z_{i,\lambda} \in \{0,1\}^n$ is chosen
independently according to the binomial distribution $\cB(n,\epsilon)$.
Once the entire input $(\xbar,\zbar) \in S^k$ is determined, we compute $\cT(\xbar,\zbar)$ as in
Definition~\ref{def:decision-tree} above.
\end{definition}

\begin{remark} 
When $k=1$, the trees defined in the above definition correspond to the 
$\gnd$ trees of Goyal, Kindler and Saks~\cite{Goyal05}.
\end{remark}

Let $\cA$ be an algorithm to process inputs from some set $S$. The usefulness of 
$\cA$ to compute some boolean function $f$ on input set $S$ is captured by the notion of its {\it advantage}.

\begin{definition}[Advantage] \hspace{0.2in}
Let $\mu$ be a distribution on some set $S$.  Let $f:S \rightarrow \{+1,-1\}$ 
and $\cA: S \rightarrow C$, where $C$ is some set. Then, the advantage of $\cA$ 
for $f$ under $\mu$ is given by
\[ \adv_{f,\mu}(\cA) = \max_{a:C \rightarrow [-1,+1]} |\E[f(X)a(\cA(X))]|,\]
where $X$ is a random variable taking values in $S$ with distribution $\mu$. 
We will use this notation even when $\cA$ corresponds to a randomized algorithm, 
in which case, the expectation is computed over $X$ as well as the 
internal random choices made by $\cA$.
\end{definition}

\begin{definition}
For a distribution $\mu$ on $\{0,1\}^n$, let 
\[ \alpha_{\mu}(n,D,\epsilon) \defeq \max_{T} \adv_{\oplus,\mu}(T), \] 
where $T$ ranges over all $(n,1,D,\epsilon)-\xnd$ trees. 
\end{definition}
\section{Lower bound for parity}
\label{sec:lb_parity}

Our lower bound proof has two parts. In this section, we will
summarize the results of these two parts of the argument in the form
of lemmas. Then, using these lemmas we will prove the main
theorem. The lemmas themselves will be proved in the next three sections.

\subsection{First part of the proof}
This part of our argument is based on the
observation that in a random planar network, nodes are typically
distributed uniformly over the entire area. By fixing the inputs of
some of the nodes (and thereby making them auxiliary), we can create 
`buffer zones' of auxiliary nodes so that the remaining nodes now fall 
into large number of well-separated large clusters.

\begin{lemma}
\label{lem:network-decomposition}
Suppose $R \leq N^{-\beta}$, for some $\beta > 0$.  Then, with probability 
$1 - o(1)$ over the random variable $\cN(N,R)$, the following holds: if
\begin{quote}
there is a $\delta$-error protocol on $\cN$ with $\epsilon$-noise for computing the
parity function (on $N$ bits) with $T$ transmissions,
\end{quote}
then
\begin{quote}
there is an $(n,k)$-decomposition of $\cN$ and a $\delta$-error $\epsilon$-noise 
$(n,k,d,D)$-protocol with respect to this decomposition for computing parity (on $nk$ bits), where 
$n = \Omega(NR^2)$, $k = \Omega(1/R^2)$, $d = O(T/N)$ and $D = O(TR^2)$.
\end{quote}
\end{lemma}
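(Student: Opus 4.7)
The plan is to combine a deterministic grid tiling of $[0,1]^2$ with two Markov-style thinnings driven by the given protocol. I would first tile the unit square by an axis-aligned grid of cells of side $s = cR$, for a fixed constant $c > 2$, and define each cell's \emph{interior} to be the concentric subsquare of side $(c-2)R$, leaving a buffer ring of width $R$. A direct Euclidean calculation shows that every point in a cell's interior is at distance at least $R$ from every point outside that cell; since edges of $\cN$ are defined by the strict inequality $\dist < R$, an interior sensor can have no neighbor outside its own cell. The grid has $k_0 = \Theta(1/R^2)$ cells, each with interior area $\Theta(R^2)$. A Chernoff bound on the number of uniform random points falling into a fixed interior, combined with a union bound over the $k_0$ cells, shows that with probability $1 - o(1)$ every interior contains $\Theta(NR^2)$ sensors (the concentration succeeds whenever $NR^2$ dominates $\log(1/R)$, which holds in the regime where the network is connected; when $NR^2$ is smaller, the network is disconnected with high probability and the lemma is vacuous).

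Given a $\delta$-error protocol with $T$ transmissions, let $T_j$ be the total number of transmissions made by sensors in cell $j$, and call a sensor \emph{heavy} if it makes more than $\alpha T/N$ transmissions, for a large constant $\alpha$ to be fixed. Since $\sum_j T_j \leq T$ and the number of heavy sensors is at most $N/\alpha$, two applications of Markov's inequality show that at most $k_0/4$ cells violate $T_j \leq 4T/k_0$, and at most $k_0/4$ cells contain more than $4N/(\alpha k_0)$ heavy sensors. The remaining $\Omega(k_0) = \Omega(1/R^2)$ cells satisfy both bounds and will serve as our clusters; all nodes of the discarded cells are tossed into $A_0$ with their inputs fixed arbitrarily, and by the geometric property above, no node in $A_0$ neighbors any surviving interior.

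For each surviving cell $j$, I would define $I_j$ as the light sensors lying in its interior and $A_j$ as every other sensor in cell $j$. By construction each node in $I_j$ transmits at most $\alpha T/N = O(T/N)$ times, giving (P3) with $d = O(T/N)$; the total transmissions in $I_j \cup A_j$ is at most $T_j \leq 4T/k_0 = O(TR^2)$, giving (P4) with $D = O(TR^2)$; and (P2) follows because every neighbor of an interior sensor lies in the same cell. Crucially, $|I_j|$ exceeds (interior population) $-$ (heavy count in cell $j$) $= \Omega(NR^2) - O(NR^2/\alpha) = \Omega(NR^2)$ once $\alpha$ is large enough, so one can truncate every $I_j$ to a common value $n = \Omega(NR^2)$ by moving surplus interior-light sensors into $A_j$, giving (P1). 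Finally, freezing the auxiliary-node inputs and XORing their parity into the output converts the original $\delta$-error protocol for the $N$-bit parity into a $\delta$-error protocol for parity on the $nk$ bits assigned to $I$.

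The main obstacle is the joint bookkeeping: properties (P1)--(P4) must hold simultaneously in a single decomposition, forcing us to layer geometric concentration (for point counts), a Markov thinning over cells (for $D$), and a second Markov thinning over heavy nodes (for $d$), and then argue that the constants $c$ and $\alpha$ can be chosen so that the surviving $I_j$'s still contain the required $\Omega(NR^2)$ nodes after all the removals. Once $c$ and $\alpha$ are fixed as large enough absolute constants, the remaining arguments are routine concentration and union bounds.
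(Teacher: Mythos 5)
Your proof is correct, and the overall strategy coincides with the paper's: tile the square into $\Theta(1/R^2)$ pieces, use geometry to ensure that input nodes cannot hear outside their own cluster, apply an averaging argument to retain $\Omega(1/R^2)$ low-traffic clusters, and select a common number $n=\Omega(NR^2)$ of low-transmission nodes per cluster as inputs, freezing everyone else. There are two implementation differences worth noting. First, the paper realizes the isolation buffer differently: it tiles with cells of side $\approx R$, keeps a $1$-in-$9$ sub-lattice $\cS_1$ of ``central'' cells, and takes each cluster to be a center cell $c$ together with its $3\times 3$ neighborhood $\Gamma(c)$ of auxiliary cells, whereas you use larger cells of side $cR$ and shave a width-$R$ ring off each one; both give the needed property that interior nodes have neighbors only inside their cluster. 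Second, the paper gets the per-node bound $d$ more economically: within each surviving cell it simply takes the $n$ processors that transmitted the least, and then $d\le D/n$ follows by pigeonhole from the cluster transmission budget $D$, so only one Markov thinning (over clusters) is needed; you instead introduce a global heaviness threshold $\alpha T/N$ and run a second Markov thinning over cells, which works and gives the same asymptotics but with one extra parameter to tune. The remaining ingredients --- the Chernoff concentration on cell occupancy (valid once $R$ is above the connectivity threshold, as you correctly flag), the Markov thinning on per-cluster transmission counts, and fixing the auxiliaries' inputs to reduce to parity on $nk$ bits --- match the paper's argument.
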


This lemma is proved in Section~\ref{sec:network-decomposition}.

\subsection{second part of the proof}
In the second part of our argument, we analyze such bounded protocols on
decomposed networks. Our analysis closely follows that of Goyal, Kindler and
Saks~\cite{Goyal05}. For showing lower bounds on the number of transmissions 
in a noisy communication protocol, Goyal et al. translated such protocols into 
$\gnd$ trees.

since we want to analyse bounded protocols for decomposed networks, we first translate
such protocols into $\xnd$-trees. Then we argue that if the inputs come from a product distribution, 
then $\xnd$-trees for computing parity can be rearranged to get ordered $\xnd$-trees, 
and hence read-once noisy decision trees (using Remark~\ref{rem:ordered-read-once}). 

\begin{lemma}[Translation from protocols to read-once decision trees] \hspace{0.2in}
\label{lem:protocol-to-read-once-tree-translation}
For any $\epsilon$-noise $(n,k,d,D)$-protocol $\Pi$ and any distribution $\mu$ on $\{0,1\}^n$, 
there is a read-once noisy $(n,k)$-decision tree $\cT$ such that
\begin{itemize}
\item $\adv_{\oplus,\mu^k}(\cT) \geq \adv_{\oplus,\mu^k}(\Pi)$;
\item $\adv_{\oplus, \mu}(g) \leq \alpha_\mu(n,3D,\epsilon^d)$ for
      every function $g$ that appears in $\cT$.
\end{itemize}
\end{lemma}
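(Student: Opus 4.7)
The plan is to adapt the Goyal--Kindler--Saks (GKS) translation to the decomposed-network setting, in three steps: simulate the protocol by an xnd-tree, reorder its levels using the product structure of the input distribution together with the symmetry of parity, and then collapse consecutive levels into read-once form.

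First I would translate $\Pi$ into an $(n,k,\cdot,\epsilon)$-xnd tree $\hT_0$ that simulates the protocol transmission by transmission, in the spirit of the $\gnd$-tree construction of Goyal--Kindler--Saks. A transmission by a processor $u$ is replaced by a tree node evaluating a function of $u$'s history, and the channel noise is encoded by xoring the queried input block with a fresh noise bit indexed by $\Lambda$. Property (P2) of the decomposition, which forces every neighbour of $I_j$ to lie in $I_j \cup A_j$, is essential: it ensures that the only queries whose output depends on the $j$-th input block $\xbar_j$ come from transmissions by processors in $I_j \cup A_j$, so by (P4) the number of such queries is bounded in terms of $D$, and by (P3) each individual bit of $\xbar_j$ is seen through at most $d$ noisy copies. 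The resulting $\hT_0$ already satisfies $\adv_{\oplus,\mu^k}(\hT_0) \geq \adv_{\oplus,\mu^k}(\Pi)$ by construction of the simulation.

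Next I would reorder $\hT_0$ into an ordered xnd-tree $\hT_1$ by bubble-swapping adjacent levels that query distinct blocks, so that in the end all queries to $\xbar_j$ are contiguous. The correctness of the swaps is the heart of the argument. Because $\mu^k$ is a product distribution, the blocks $\xbar_1,\ldots,\xbar_k$ are independent, and parity factors as $\oplus(\xbar) = \bigoplus_{j=1}^k \oplus(\xbar_j)$; consequently, for any leaf-labelling $a$, the quantity $\E[\oplus(X)\, a(\hT_0(X))]$ is invariant under swapping two adjacent queries that touch different blocks, once one averages over the noise variables attached to those queries and uses independence of the two blocks. Proving this swap lemma carefully is what I expect to be the main obstacle, since naively permuting the levels of a decision tree changes the computed function and the argument works only because parity is fully symmetric and the input is from a product distribution. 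Iterating the swaps yields $\hT_1$ with $\adv_{\oplus,\mu^k}(\hT_1) \geq \adv_{\oplus,\mu^k}(\hT_0)$.

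Finally, by Remark~\ref{rem:ordered-read-once} the ordered tree $\hT_1$ collapses into a read-once tree $\cT$ by merging each contiguous block of queries to $\xbar_j$ into a single supernode whose function $g$ is itself the sub-xnd-tree on block $j$. Tracking the bookkeeping in the previous two steps, the sub-xnd-tree at each supernode has at most $3D$ queries and effective noise parameter $\epsilon^d$, where the factor $3$ absorbs the overhead incurred by the GKS translation and the reordering, and the exponent $d$ records that each of the $n$ bits of $\xbar_j$ is transmitted at most $d$ times before reaching the tree. Hence $\adv_{\oplus,\mu}(g) \leq \alpha_\mu(n,3D,\epsilon^d)$ by the definition of $\alpha_\mu$, while the advantage bound $\adv_{\oplus,\mu^k}(\cT) \geq \adv_{\oplus,\mu^k}(\Pi)$ is inherited from the first two steps, completing the proof.
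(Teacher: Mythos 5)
Your three-step skeleton (simulate by an xnd-tree, reorder, collapse) matches the paper's structure, and you correctly identify the reordering as the crux. But there are genuine gaps in how you propose to carry out each step.

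The most serious is the reordering. Your plan is to ``bubble-swap adjacent levels that query distinct blocks'' and to argue that $\E[\oplus(X)\,a(\hT_0(X))]$ is \emph{invariant} under such a swap. This cannot be made to work as stated, because swapping two adjacent levels of an oblivious decision tree is not a well-defined, function-preserving operation: the function labelling a vertex $w$ at level $\ell+1$ is indexed by $w$, which is a child of a level-$\ell$ vertex $v$, so it already ``knows'' the outcome of the level-$\ell$ query. If you try to place that query before the level-$\ell$ query, there is no canonical choice of which function to use, and no way to recover the same leaf distribution. The paper's mechanism (Claim~\ref{clm:move-to-root}) is different and weaker in exactly the way needed: it does not preserve advantage under a swap, it only shows that advantage cannot decrease when all queries at the last level are \emph{replaced} by the single best one (the $g_{v^*}$ maximizing $\beta(v)$), which can then be moved to the root because it no longer depends on the path. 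This replacement step, together with the inequality $|\E[\alpha(\rv_t)\beta(\rv_t)]| \leq \E[|\alpha(\rv_t)|]\cdot|\beta(v^*)|$, is where the product distribution and the multiplicativity of parity are actually used. Repeated application via a minimal-counterexample argument then gives the ordered tree. Your proposal names the right ingredients but points them at a claim (swap-invariance) that is false.

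Two smaller issues with the bookkeeping. First, the factor $3$ in $\alpha_\mu(n,3D,\epsilon^d)$ comes entirely from the protocol-to-semi-noisy simulation in Claim~\ref{clm:protocol-to-semi-noisy} (each transmission by an input processor is simulated by two noiseless transmissions from a copy of that processor, now auxiliary, plus one noisy transmission from a fresh input vertex); the reordering contributes nothing to this factor, contrary to what you write. Second, the noise exponent $\epsilon^d$ is not merely ``recording'' that each bit is transmitted at most $d$ times. One needs the Goyal--Kindler--Saks lemma (their Lemma~36) stating that from a single $\epsilon^d$-noisy copy one can locally generate $d$ independent $\epsilon$-noisy copies; this is what lets Claim~\ref{clm:semi-noisy-to-noisy-copy} collapse all $d$ transmissions of an input processor into one $\epsilon^d$-noisy broadcast. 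Finally, your xnd-tree simulation skips the intermediate ``semi-noisy'' and ``noisy-copy'' models (Claims~\ref{clm:protocol-to-semi-noisy}--\ref{clm:noisy-copy-to-xnd}); these are needed because in the original protocol an input processor's message can depend on what it has received, and also auxiliary transmissions are noisy, so the translation to a form where noise appears only as XORs on the input blocks is not immediate.
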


Next we observe the following 'product property' for the advantage of 
read-once noisy decision trees.

\begin{lemma}[Advantage of read-once decision trees] \hspace{0.2in}
\label{lem:read-once-tree-advantage}
Let $h: \{0,1\}^n \rightarrow \{+1,-1\}$.  Suppose $\cT$ is a read-once
$(n,k)$-decision tree for computing $f:(\{0, 1\}^n)^k \rightarrow \{+1,-1\}$
defined by $f(\langle{x_1,x_2,\ldots,x_k}\rangle) = \prod_{i=1}^k
h(x_i)$.  Suppose, for each function $\cA$ that appears in $\cT$ we have
$\adv_{h,\mu}(g) \leq \alpha$. Then, $\adv_{f,\mu^k}(\cT) \leq
\alpha^k.$
\end{lemma}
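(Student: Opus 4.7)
The plan is to argue by induction on $k$. For $k=1$, the read-once tree $\cT$ queries its single input $x_1$ only at the root via some noisy function $g$, and the leaves are its children, so $\cT(X) = g(X_1)$ and $\adv_{f,\mu}(\cT) = \adv_{h,\mu}(g) \leq \alpha$ by the hypothesis on $\cT$.

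For the inductive step, fix any leaf labeling $a:L(\cT)\to[-1,+1]$. By obliviousness, all nodes at a given level query the same input variable, and by the read-once property each variable is queried at exactly one level; hence the variable $x_{i_*}$ queried at the root is queried nowhere else in $\cT$. Let $g$ be the noisy function at the root and $C$ its set of children; for each $c\in C$, let $\cT_c$ be the subtree rooted at $c$, which is a read-once $(n,k-1)$-decision tree on the remaining $k-1$ coordinates whose labeling functions still satisfy the advantage hypothesis. Writing $f(X) = h(X_{i_*}) \cdot \prod_{i\neq i_*} h(X_i)$ and summing over the value $c = g(X_{i_*})$ of the root's query, the key step is the factorization
\[ \E[f(X)\,a(\cT(X))] = \sum_{c\in C} b_c \cdot U_c, \]
where $b_c = \E\bigl[h(X_{i_*})\,\mathbf{1}(g(X_{i_*}) = c)\bigr]$ and $U_c = \E\bigl[\prod_{i\neq i_*} h(X_i)\,a_c(\cT_c(X_{-i_*}))\bigr]$, with $a_c$ denoting the restriction of $a$ to $L(\cT_c)$. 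This split is legitimate because under $\mu^k$ the coordinate $X_{i_*}$ is independent of the remaining coordinates, and because the internal noise used at the root is independent of all internal noise used inside any $\cT_c$.

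Once the factorization is in hand, the remaining arguments are short. Choosing $\tilde{a}(c) = \sign(b_c)$ in the definition of $\adv_{h,\mu}(g)$ shows $\sum_c |b_c| \leq \adv_{h,\mu}(g) \leq \alpha$. Applying the inductive hypothesis to $\cT_c$ with output labeling $a_c$ (permissible since advantage is a maximum over all such labelings) gives $|U_c| \leq \alpha^{k-1}$. Combining,
\[ |\E[f(X)\,a(\cT(X))]| \leq \sum_{c\in C} |b_c|\, |U_c| \leq \alpha\cdot\alpha^{k-1} = \alpha^k, \]
and maximizing over $a$ completes the inductive step. The step most needing care is the factorization itself: one must invoke both the product structure of $\mu^k$ (crucially using the read-once property, so that $x_{i_*}$ is not queried anywhere else in $\cT$) and the independence of internal noise across tree nodes; everything else is a direct application of the definitions.
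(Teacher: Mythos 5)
Your proof is correct and is essentially the same argument as the paper's, just organized as a top-down forward induction on $k$ (peeling off the root query) rather than the paper's reverse induction on levels (peeling off the leaf query); both rest on the same factorization, which uses the product structure of $\mu^k$, the read-once property, and independence of internal noise across tree nodes. Your $\sum_c |b_c| \le \adv_{h,\mu}(g)$ step via $\tilde a(c)=\sign(b_c)$ is exactly the content of the paper's Proposition~\ref{prop:adv}, inlined.
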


The above two lemmas give the main lemma of the second part of our proof.

\begin{lemma}
\label{lem:bounded-protocol-advantage}
For all distributions $\mu$ on $\{0,1\}^n$ and all $\epsilon$-noise $(n,k,d,D)$-protocol $\Pi$, 
we have 
\[ \adv_{\oplus, \mu^k}(\Pi) \leq \alpha_{\mu}(n,3D,\epsilon^d)^k. \]
\end{lemma}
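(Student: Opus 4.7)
The plan is to combine the two previous lemmas in a straightforward chain, with the only real observation being that the parity function has the product structure required by Lemma~\ref{lem:read-once-tree-advantage}.

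First, I would apply Lemma~\ref{lem:protocol-to-read-once-tree-translation} to the given $\epsilon$-noise $(n,k,d,D)$-protocol $\Pi$ together with the distribution $\mu$. This gives a read-once noisy $(n,k)$-decision tree $\cT$ satisfying $\adv_{\oplus,\mu^k}(\cT) \geq \adv_{\oplus,\mu^k}(\Pi)$, and such that every function $g$ appearing in $\cT$ satisfies $\adv_{\oplus,\mu}(g) \leq \alpha_\mu(n,3D,\epsilon^d)$.

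Next I would use the product structure of parity. Viewing parity with range $\{+1,-1\}$, and writing the input as $\langle x_1, \ldots, x_k\rangle$ with $x_i \in \{0,1\}^n$, we have
\[
\oplus(\langle x_1,\ldots,x_k\rangle) = \prod_{i=1}^{k} \oplus(x_i).
\]
Thus the overall parity factors as a product, block by block, of the parities $h(x_i) \defeq \oplus(x_i)$ on the individual blocks. This is precisely the hypothesis of Lemma~\ref{lem:read-once-tree-advantage} with $h = \oplus$ and $f = \oplus$ on $nk$ bits. Setting $\alpha \defeq \alpha_\mu(n,3D,\epsilon^d)$, the bound on each function $g$ in $\cT$ from the first step gives $\adv_{h,\mu}(g) \leq \alpha$.

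Applying Lemma~\ref{lem:read-once-tree-advantage} then yields $\adv_{\oplus,\mu^k}(\cT) \leq \alpha^k$. Chaining this with the inequality from the first step,
\[
\adv_{\oplus,\mu^k}(\Pi) \leq \adv_{\oplus,\mu^k}(\cT) \leq \alpha_\mu(n,3D,\epsilon^d)^k,
\]
which is exactly the claim. There is no real obstacle here since all of the work has been pushed into the two previous lemmas; the only thing to check is that parity cleanly factors as a product over the clusters, which it does by definition.
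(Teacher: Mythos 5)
Your proof is correct and takes exactly the same route as the paper, which simply declares the lemma immediate from Lemma~\ref{lem:protocol-to-read-once-tree-translation} and Lemma~\ref{lem:read-once-tree-advantage}. You have merely spelled out the chaining and the (obvious but worth noting) observation that parity on $nk$ bits factors as a product of per-block parities, which the paper leaves implicit.
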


\begin{proof}
Immediate from Lemma~\ref{lem:protocol-to-read-once-tree-translation} and 
Lemma~\ref{lem:read-once-tree-advantage}.
\end{proof}

Section~\ref{sec:protocol-to-read-once-tree-translation} is devoted to proving
Lemma~\ref{lem:protocol-to-read-once-tree-translation}, and  Section~\ref{sec:read-once-tree-advantage}
proves Lemma~\ref{lem:read-once-tree-advantage}.

\subsection{Putting the two parts together}

To complete the proof of our lower bound, we need the following result of~\cite{Goyal05}.

\begin{definition}
\label{def:sensitivity}
Let $f:\{0,1\}^n \rightarrow \{0,1\}$ be any function. The sensitivity of $f$ at input $x \in \{0,1\}^n$,
denoted $S_x(f)$, is the number of indices $i \in [n]$ such that $f$ changes value upon flipping the $i$th 
bit of $x$. The sensitivity of $f$, denoted $s(f)$, is the maximum of $S_x(f)$ over all $x$.
\end{definition}

\begin{theorem}[Goyal, Kindler and Saks~\cite{Goyal05} (Theorem 32)]
\label{thm:gks-original}
Let $\epsilon \in (0,1/2)$ and $\delta \in (0,1/16)$, and let $f$ be an $n$-variate boolean
function. Any randomized $\gnd$ tree $T$ that for every input $x$, outputs $f(x)$ with probability 
$1-\delta$ when run with noise parameter $\epsilon$ satisfies:
\[ \depth(T) \geq \frac{\epsilon^2 \log(1/4\delta)}{50 \log^2(1/\epsilon)} s(f).\]
\end{theorem}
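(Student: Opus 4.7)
To reprove this theorem, my plan is to reduce worst-case computation of $f$ to a collection of binary hypothesis tests, one per sensitive coordinate, and combine per-coordinate query lower bounds to conclude. I would fix an input $x^*$ realizing $S_{x^*}(f) = s(f)$ with sensitive coordinates $i_1,\ldots,i_{s(f)}$, and for each $j$ let $x^{(j)}$ denote $x^*$ with bit $i_j$ flipped, so $f(x^{(j)}) \neq f(x^*)$. Since $T$ errs with probability at most $\delta$ on each of $x^*$ and the $x^{(j)}$, the total variation distance between the output distributions of $T$ on $x^*$ and on $x^{(j)}$ is at least $1-2\delta$. By data processing, the same lower bound holds for the distribution of the full transcripts (queries asked and noisy responses received) produced by $T$ on these two inputs.

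Next I would bound the KL divergence between these transcripts. Only queries to coordinate $i_j$ contribute, since for any other coordinate the response distribution is identical on $x^*$ and on $x^{(j)}$. Each query to $i_j$ produces a Bernoulli-$\epsilon$ response on $x^*$ versus Bernoulli-$(1-\epsilon)$ on $x^{(j)}$, with per-query KL contribution $D \defeq (1-2\epsilon)\log\frac{1-\epsilon}{\epsilon}$. By the chain rule of KL divergence along the tree's execution,
\[
\mathrm{KL}\bigl(\mathrm{transcript}(x^*) \;\|\; \mathrm{transcript}(x^{(j)})\bigr) \;=\; \E[K_j] \cdot D,
\]
where $K_j$ is the random number of queries $T$ makes to coordinate $i_j$ on input $x^*$. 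Turning the TV bound into a KL lower bound via the data-processing inequality applied to the binary correct/incorrect test then yields $\E[K_j] = \Omega(\log(1/\delta))$, up to constants depending on $\epsilon$.

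Finally, on every path through $T$ the queries to distinct coordinates $i_j$ are disjoint, so $\sum_{j=1}^{s(f)} K_j \leq \depth(T)$ always; taking expectations gives $\depth(T) \geq \Omega\bigl(s(f)\log(1/\delta)\bigr)$, with an $\epsilon$-dependent constant, which is the desired shape of the bound.

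The main obstacle is matching the precise $\epsilon$-dependence $\epsilon^2/(50\log^2(1/\epsilon))$ stated in the theorem: the KL route above naturally yields a rate $\Omega\bigl(1/((1-2\epsilon)\log\tfrac{1-\epsilon}{\epsilon})\bigr)$ per query to a sensitive coordinate, with qualitatively different behavior as $\epsilon \to \tfrac{1}{2}$. Recovering the exact constants in the statement likely requires the Fourier-analytic argument of Goyal, Kindler and Saks, which tracks how the noise operator with parameter $1-2\epsilon$ dampens Fourier weight across queries and combines this with a sensitivity-based lower bound on the advantage achievable at $x^*$.
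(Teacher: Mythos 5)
The paper does not actually prove this theorem: it is stated purely as a citation to Goyal, Kindler and Saks~\cite{Goyal05} (their Theorem 32), and the only nearby argument in the paper is the derivation of the restatement Theorem~\ref{thm:gks-restated} \emph{from} it. So your proposal cannot be compared to a proof in the paper; it must instead be judged against what the $\gnd$ model actually allows.

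There is a genuine gap, and it sits precisely at the step you treat as immediate. You assert that ``on every path through $T$ the queries to distinct coordinates $i_j$ are disjoint, so $\sum_{j=1}^{s(f)} K_j \leq \depth(T)$.'' That holds only if each node of the tree queries a \emph{single} coordinate of $x$. But a $\gnd$ tree, as Definition~\ref{def:xnd-tree} specializes to $k=1$ (and as Remark~\ref{rem:ordered-read-once} preceding it indicates, this is the intended correspondence with GKS), allows each node $v$ to evaluate an \emph{arbitrary} function $g'_v$ of an entire $n$-bit noisy copy $x \oplus z_{\lambda_v}$; this is forced by the translation from noisy protocols, where a processor's transmission is an arbitrary function of the whole noisy vector it received. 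Such a query's response distribution depends in general on \emph{all} sensitive coordinates at once, so the same query can contribute up to $(1-2\epsilon)\log\tfrac{1-\epsilon}{\epsilon}$ to the transcript KL divergence for \emph{every} one of the $s(f)$ binary tests $x^*$ vs.\ $x^{(j)}$ simultaneously. Then $\sum_j K_j$ is only bounded by $s(f)\cdot\depth(T)$, and your KL decomposition degrades to $\depth(T) = \Omega\bigl(\log(1/\delta)/\bigl((1-2\epsilon)\log\tfrac{1-\epsilon}{\epsilon}\bigr)\bigr)$, with the factor $s(f)$ gone. Recovering that factor against such rich queries is the entire content of the theorem, and it is why GKS run a Fourier-analytic potential argument (tracking how the noise operator with parameter $1-2\epsilon$ damps the posterior's Fourier weight per query) rather than a per-coordinate KL count. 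You flag the $\epsilon$-dependence mismatch and defer to their Fourier method as a constant-matching fix, but the problem is more fundamental: without an idea for handling queries that read a whole noisy copy, your outline does not establish the $s(f)$ scaling at all. (Indeed, were the model restricted to single-coordinate queries as your plan assumes, your bound would be strictly \emph{stronger} than the stated one in both limits $\epsilon\to 0$ and $\epsilon\to 1/2$, which is itself a signal that the model has been narrowed.)
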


We will restate the above theorem for the case of parity in terms of advantage of $\xnd$ trees.

\begin{theorem}[Restatement of Theorem~\ref{thm:gks-original}]
\label{thm:gks-restated}
Let $\mu$ be the distribution on $\{0,1\}^n$ defined by $\mu(0^n)=\frac{1}{2}$  
and $\mu(e)=\frac{1}{2n}$ for all $e \in \{0,1\}^n$ of weight 1. Then
\begin{equation}
\alpha_{\mu}(n,D,\epsilon) \leq \max \left(1 - \exp\left(-O\left(\frac{D \log^2 (1/\epsilon)}{\epsilon^2 n}\right)\right),7/8 \right).
\label{eqn:gks-restated}
\end{equation}
\end{theorem}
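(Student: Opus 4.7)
Theorem~\ref{thm:gks-restated} repackages Theorem~\ref{thm:gks-original}, specialized to parity, as an upper bound on the advantage of an $\xnd$ tree under the sensitivity distribution $\mu$ at $0^n$; note that $0^n$ is a maximum-sensitivity input for $\oplus$, since flipping any single coordinate changes the parity. The plan is to convert an $\xnd$ tree with high advantage under $\mu$ into a $\gnd$ tree with small pointwise error at $0^n$ and at every weight-$1$ input, and then invoke Theorem~\ref{thm:gks-original}.

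Let $T$ be an $(n,1,D,\epsilon)$-$\xnd$ tree (equivalently a $\gnd$ tree, by the remark following Definition~\ref{def:xnd-tree}) with $\adv_{\oplus,\mu}(T) = \alpha$. I would first symmetrize $T$ by prepending a uniformly random permutation of the $n$ input coordinates; this preserves the depth $D$ and, since $\mu$ is permutation-invariant, also preserves the advantage, while making the output distribution of $T$ on every weight-$1$ input identical. Let $a$ be an optimal post-processor achieving the advantage; by replacing $a$ by its sign we may assume $a$ takes values in $\{-1,+1\}$. Define $p_0 = \Pr[a(T(0^n)) = +1]$ and $p_1 = \Pr[a(T(e_i)) = +1]$ (the latter independent of $i$ by symmetry). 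Expanding the expectation shows $\adv_{\oplus,\mu}(T) = |p_0 - p_1|$, so after possibly replacing $a$ by $-a$ we may take $p_0 - p_1 = \alpha$. The pointwise errors of $a \circ T$ at $0^n$ and at each $e_i$ are $1 - p_0$ and $p_1$ respectively; since both are non-negative and sum to $1 - \alpha$, each is at most $1 - \alpha$.

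If $\alpha \leq 7/8$ the claimed bound is trivial. If $\alpha > 15/16$, each pointwise error is below $1/16$, and Theorem~\ref{thm:gks-original} applied to $a \circ T$ with $\delta = 1-\alpha$ and $s(\oplus) = n$ gives
\[ D \geq \frac{\epsilon^2 \log(1/(4(1-\alpha)))}{50 \log^2(1/\epsilon)} \cdot n, \]
which rearranges to $\alpha \leq 1 - \exp(-O(D \log^2(1/\epsilon)/(\epsilon^2 n)))$. In the intermediate range $7/8 < \alpha \leq 15/16$, I would amplify by independently running $T$ a constant number of times (three repetitions suffice) and taking the majority vote of the post-processed outputs; by a Chernoff bound this yields a tree of depth $O(D)$ whose pointwise error at $0^n$ and at each $e_i$ is below $1/16$. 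Applying Theorem~\ref{thm:gks-original} to the amplified tree then gives $D \geq \Omega(\epsilon^2 n / \log^2(1/\epsilon))$, which matches the claimed bound because $\log(1/(1-\alpha))$ is $O(1)$ in this range.

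The main obstacle is that Theorem~\ref{thm:gks-original} is stated for trees that compute $\oplus$ correctly with error below $\delta$ on \emph{every} input, whereas we control the pointwise error only at $0^n$ and at its weight-$1$ neighbors. Fortunately, the proof in~\cite{Goyal05} fixes a single maximum-sensitivity input $x^*$ and only exploits the tree's behavior on $x^*$ together with its $s(f)$ sensitive neighbors. For parity one can take $x^* = 0^n$, whose sensitive neighbors are precisely the weight-$1$ strings, so the conclusion of Theorem~\ref{thm:gks-original} still holds under the weaker hypothesis we supply. Verifying this restricted form of the GKS argument is the principal technical point of the reduction.
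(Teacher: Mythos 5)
Your argument is correct and hinges on the same observation as the paper's proof: the proof of Theorem~\ref{thm:gks-original} in~\cite{Goyal05} only exercises the tree's behavior at the fixed maximum-sensitivity input $0^n$ and its $n$ sensitive neighbors, so it really establishes a lower bound on the error under the sensitivity distribution $\mu$ rather than merely on the worst-case error. Where you differ from the paper is in how you extract the bound from that observation, and your route is noticeably longer. The paper passes directly to the average error under $\mu$, which for a $\pm1$-valued post-processor with advantage $\alpha$ is exactly $\frac{1}{2}(1-\alpha)$; hence $\alpha > 7/8$ already forces average error below $1/16$, Theorem~\ref{thm:gks-original} applies with $\delta = \frac{1}{2}(1-\alpha)$, and the claim follows by taking the contrapositive and adjoining the trivial case $\alpha \leq 7/8$. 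You instead convert $\alpha$ into two pointwise error bounds $1-p_0$ and $p_1$, each of which you can only bound by $1-\alpha$ rather than $\frac{1}{2}(1-\alpha)$; that lost factor of two is exactly what leaves the gap $7/8 < \alpha \leq 15/16$, which you then close by majority-vote amplification, and what requires the preliminary symmetrization over coordinate permutations to make the weight-one errors uniform. Both extra steps vanish if you work with average error from the start. Finally, you correctly and explicitly flag that one must re-inspect the GKS proof to see it needs only $\mu$-restricted correctness rather than worst-case correctness; that is indeed the crux, and the paper makes the identical appeal without reproving it.
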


\begin{proof}[Proof of the restatement.]
Let $\mu$ be as given in the theorem. Theorem~\ref{thm:gks-original} is proved in~\cite{Goyal05} by proving 
an upper bound on the probability that $T$ is correct when $T$ is executed on an input selected at random 
from the distribution $\mu$. Thus any $\gnd$ tree $T$ that makes an average error of at most $\delta < 1/16$ for
computing the parity function $\oplus:\{0,1\}^n \rightarrow \{0,1\}$ on inputs from the distribution $\mu$, 
when run with noise parameter $\epsilon$, must have
\[ \depth(T) \geq \frac{\epsilon^2 \log(1/4\delta)}{50 \log^2(1/\epsilon)} n, \]
since the sensitivity of the parity function $\oplus:\{0,1\}^n \rightarrow \{0,1\}$ is $n$. 
As the RHS of the above equation is strictly decreasing with $\delta$, we conclude that 
any $(n,1,D,\epsilon)-\xnd$ tree $T$ makes an average error of at least $\delta'$ for computing
the parity function on inputs from the distribution $\mu$, where 
\[ \delta' = \min \left(\exp \left(-O \left( \frac{\log^2(1/\epsilon) D}{\epsilon^2 n} \right) \right),1/16 \right). \]
 Thus $\adv_{\oplus,\mu}(T) \leq 1 - 2 \delta'$, which proves the theorem.
\end{proof}

\begin{proof}[Proof of Theorem~\ref{thm:lb-parity}.]
Let $\mu$ be the distribution defined in Theorem~\ref{thm:gks-restated}.
By combining Lemmas~\ref{lem:network-decomposition} and~\ref{lem:bounded-protocol-advantage}, 
we conclude that with probability $1-o(1)$ over the random variable $\cN(N,R)$, the following is true: 
if there is a $\delta$-error protocol on $\cN(N,R)$ with $\epsilon$-noise for computing the
parity function with $T$ transmissions, then
\[ 1-2\delta  \leq \alpha_\mu(n,3D,\epsilon^d)^k,\]
where $n = \Omega(NR^2)$, $k = \Omega(1/R^2)$, $d = O(T/N)$ and $D =O(TR^2)$. 

Since $R \leq N^{-\beta}$, $k = \Omega(1/R^2)$ and $\delta$ is a constant, $\alpha_\mu(n,3D,\epsilon^d)$ 
must be inverse polynomially close to $1$.  Let $k \geq C/R^2$ and $d \leq C'T/N$ for some constants $C,C'$.
From (\ref{eqn:gks-restated}), we thus get
\[ 1 - 2\delta \leq \left(1 - \exp\left(- O\left(\frac{TR^2 \log^2(1/\epsilon^{C'T/N})}{NR^2 \epsilon^{2C'T/N}} 
\right) \right) \right)^{\frac{C}{R^2}}. \]
Denoting $T/N$ by $S$ and simplifying, we have
\[ 1 - 2\delta \leq \exp \left(-\exp \left(-O \left(\frac{S \log^2(1/\epsilon^{C'S})}{\epsilon^{2C'S}} 
\right) \right) \frac{C}{R^2} \right). \]
Taking logarithm and noting that $R \leq N^{-\beta}$,
\[ \exp \left(-O\left(\frac{S \log^2(1/\epsilon^{C'S})}{\epsilon^{2C'S}} \right) \right) \leq 
\frac{N^{-2\beta}}{C} \ln \left(\frac{1}{1-2\delta} \right). \]
From this we get,
\[ \frac{S \log^2(1/\epsilon^{C'S})}{\epsilon^{2C'S}} \geq C'' \log N, \]
for some constant $C''$. This yields $S = \Omega (\log \log N)$ and hence $T = \Omega (N \log \log N)$.
\end{proof}
\section{Decomposition of random planar networks}
\label{sec:network_decomposition}

The random placement of nodes in the unit square typically
arranges them uniformly. We will exploit this uniformity to obtain the
required decomposition.

\begin{lemma}[Chernoff bounds] \hspace{0.2in} 
\label{lm:chernoff}
Let $X$ be the sum of $N$ independent identically distributed indicator random variables. 
Let $\mu=E[X]$. Then, $\Pr[ X \leq \frac{1}{2}\mu]  \leq \exp(-0.15\mu)$. 
\end{lemma}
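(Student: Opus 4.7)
The plan is to establish this via the standard exponential moment method (Chernoff's method), optimizing the tilt parameter to land the constant $0.15$ in the exponent. Let $X = \sum_{i=1}^N X_i$ where each $X_i$ is an indicator with $\Pr[X_i = 1] = p$, so that $\mu = Np$.

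First I would apply Markov's inequality to the random variable $e^{-tX}$ for a parameter $t > 0$ to be chosen later. Since the event $\{X \leq \mu/2\}$ coincides with $\{e^{-tX} \geq e^{-t\mu/2}\}$, we obtain
\[
\Pr[X \leq \tfrac{1}{2}\mu] \;\leq\; e^{t\mu/2}\,\E[e^{-tX}].
\]
Next, using independence and the pointwise bound $1 + x \leq e^x$, I would compute
\[
\E[e^{-tX}] = \bigl(1 - p + p e^{-t}\bigr)^N = \bigl(1 - p(1 - e^{-t})\bigr)^N \leq \exp\!\bigl(-\mu(1 - e^{-t})\bigr),
\]
which yields the one-parameter bound
\[
\Pr[X \leq \tfrac{1}{2}\mu] \;\leq\; \exp\!\Bigl(\tfrac{t\mu}{2} - \mu(1-e^{-t})\Bigr).
\]

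Finally, I would select $t = \ln 2$, which gives $1 - e^{-t} = \tfrac{1}{2}$ and reduces the exponent to $\tfrac{\mu}{2}(\ln 2 - 1)$. Since $(1 - \ln 2)/2 \approx 0.1534 > 0.15$, this produces the claimed bound $\Pr[X \leq \tfrac{1}{2}\mu] \leq \exp(-0.15\,\mu)$. (A slightly tighter choice $t^{*} = \ln 2$ is in fact the exact minimizer of $t/2 - (1-e^{-t})$, obtained by setting the derivative $\tfrac{1}{2} - e^{-t}$ to zero.)

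There is essentially no main obstacle here: the only point needing any care is verifying that the prefactor $0.15$ is indeed achievable, which follows from the numerical inequality $(1-\ln 2)/2 > 0.15$ after choosing $t = \ln 2$. The argument does not use any structure of the indicators beyond being i.i.d., so it applies in the generality stated.
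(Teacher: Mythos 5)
Your proof is correct. You derive the bound directly from the exponential moment (MGF) method: apply Markov's inequality to $e^{-tX}$, use independence and $1+x\leq e^x$ to get $\Pr[X\leq\mu/2]\leq\exp(t\mu/2-\mu(1-e^{-t}))$, then optimize with $t=\ln 2$ to obtain $\exp(-\mu(1-\ln 2)/2)\leq\exp(-0.15\mu)$. The paper instead starts from Hoeffding's two-sided product form of the Chernoff bound (their inequality (\ref{eq:chernoff})), applies it to the complementary variable $Y=N-X$ with $p=1-\mu/N$, $\delta=\mu/(2N)$, and then simplifies the two factors — one bounded by $\exp(-\delta N)=\exp(-\mu/2)$ via $1-u\leq e^{-u}$, the other equal to $2^{\mu/2}$ — to reach the identical exponent $\exp\bigl(-\tfrac{1}{2}(1-\ln 2)\mu\bigr)$. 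The two derivations land on exactly the same constant; yours is more self-contained (no citation needed), whereas the paper's is shorter by leaning on Hoeffding as a black box. One very minor nit: you implicitly assume $p>0$ when normalizing by $\mu$, but if $p=0$ then $\mu=0$ and the bound is trivially $1\leq e^0$, so nothing is lost.
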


\begin{proof}
The lemma follows immediately from the following version of the Chernoff bound due to Hoeffding~\cite{Hoeffding63}:
if the random variable $X$ has binomial distribution $\cB(N,p)$, then
\begin{equation}
\label{eq:chernoff}
\Pr[ X \geq (p+\delta)N ] \leq
\left(\frac{p}{p+\delta}\right)^{(p+\delta)N} \hspace{-0.1in} 
\left(\frac{1-p}{1-p-\delta}\right)^{(1-p-\delta)N}.
\end{equation}

To derive the lemma, we consider the random variable $Y=N-X$,
and apply (\ref{eq:chernoff}) with $p=1-\frac{\mu}{N}$ and
$\delta=\frac{\mu}{2N}$, to obtain
\begin{eqnarray*}
\Pr[X \leq \frac{1}{2}\mu] \hspace{-0.1in} &\leq& \hspace{-0.1in} \Pr[Y \geq (p+\delta)N] \\
&\leq& \hspace{-0.1in} \left(1-\frac{\delta}{p+\delta}\right)^{(p+\delta)N} \hspace{-0.1in}
\left(\frac{1-p}{1-p-\delta}\right)^{(1-p-\delta)N} \\
&\leq & \hspace{-0.1in} \exp(-\delta N) \cdot 2^{\frac{\mu}{2}}  \\
&\leq& \hspace{-0.1in} \exp\left( -\frac{1}{2}(1-\ln 2) \mu\right) \\
&\leq& \hspace{-0.1in} \exp(-0.15\mu).
\end{eqnarray*}
\end{proof}

\begin{proof}[Proof of lemma~\ref{lem:network-decomposition}.]
We tessellate the unit square into $M = (\floor{1/R})^2$ cells, each a square
of side $\frac{1}{\floor{1/R}}$. We number the rows and columns of
this tessellation using indices in $\{1,2,\ldots, \floor{1/R}\}$, and
refer to the cell in the $i$-th row and $j$-th column by $c_{ij}$.
The expected number of processors in any one cell is $\mu=N/M$. Since $R
\geq \sqrt{10 \ln N / N}$, we have $\mu \geq 10 \ln N$, and by
Lemma~\ref{lm:chernoff}, the probability that there are fewer than
$\mu/2$ processors in any one cell is is $o(\frac{1}{M})$. So, with
probability $1-o(1)$, all cells have at least $\mu/2 = N/(2M)$
processors.

Now, let $\cS_1 = \{ c_{ij}: i= 1 \pmod{3} \mbox{ and } j= 1
\pmod{3}\}.$ Then, $|\cS_1| \geq M/9$.  For each $c \in \cS_1$, let
the neighborhood of $c$, denoted by $\Gamma(c)$, be the set of (at
most nine) cells that are at distance less than $R$ from $c$. Note that
distinct cells in $\cS_1$ have disjoint neighborhoods. If the total
number of transmissions in the original protocol is $T$, then the average
number of transmissions made from $\Gamma(c)$ as $c$ ranges over $\cS_1$
is at most $9T/M$.  By Markov's inequality, for at least half the cells
$c \in \cS_1$ fewer than $18T/M$ transmissions are made from $\Gamma(c)$.
Let $\cS_2$ be the set of these cells; $|\cS_2| \geq M/18$. For each
cell $c \in \cS_2$, we identify the set $I_c$ of $\ceil{N/(4M)}$
processors that make fewest transmissions.  We are now ready to
describe the decomposition of the planar communication network.

The set of input processors will be $I = \bigcup_{c \in \cS_2}
I_c$. We fix the input of all processors not in $I$ at 0, and treat
them as auxiliary processors. The protocol continues to compute the
parity of the inputs provided to processors in $I$. For $c \in \cS_2$,
let $A_c$ be the set of auxiliary processors in the cells in 
$\Gamma(c)$. Also let $A_0$ be the set of all those auxiliary processors that are not 
in $\Gamma(c)$ for any $c \in \cS_2$. We have thus obtained a decomposition 
$\langle A_0,(I_c,A_c): c \in \cS_2 \rangle$, such that
\begin{enumerate}
\item[(a)] the number of input classes in the decomposition is 
$k=|\cS_2| \geq M/18$;
\item[(b)] each input class has $n=\ceil{\mu/4}$ processors;
\item[(c)] The total number of transmissions made by all processors in 
      $I_c \cup A_c$ is at most $D=18T/M$;
\item[(d)] The total number of transmissions made by any one processor in
  $I_c$ is at most $d= D/n = 72 T/N $. 
\end{enumerate}
Thus we have obtained an $(n,k)$-decomposition of the network $\cN$ and the 
original protocol now reduces to a $\delta$-error $\epsilon$-noise
$(n,k,d,D)$-protocol with respest to this decomposition for computing the parity function on $nk$ bits, 
where $n \geq NR^2/4$, $k \geq \frac{1}{18}\floor{1/R}^2$, $d \leq 72T/N$ and $D \leq 18TR^2$. 
\end{proof}
\section{Translation from protocols to read-once decision trees}
\label{sec:protocol_to_read_once_tree}

In this section, we will first translate bounded protocols for decomposed networks into 
$\xnd$ trees. Then we will show how we can rearrange oblivious decision trees in some cases
to make them ordered. These two steps will then enable us to prove lemma~\ref{lem:protocol-to-read-once-tree-translation}.

\subsection{From bounded protocols to $\xnd$ trees}
\label{sec:protocol-to-xnd}

\begin{lemma}
\label{lem:protocol-to-xnd}
For any $\epsilon$-noise $(n,k,d,D)$-protocol $\Pi$ and any distribution $\mu$ on $(\{0,1\}^n)^k$, 
there is an $(n,k,3D,\epsilon^d)$-$\xnd$ tree $\cT$ such that $\adv_{\oplus,\mu}(\cT)
\geq \adv_{\oplus,\mu}(\Pi)$.
\end{lemma}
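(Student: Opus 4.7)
The plan is to adapt the protocol-to-decision-tree translation of Goyal, Kindler and Saks~\cite{Goyal05} to accommodate the cluster structure of our decomposition. Their construction turns a noisy protocol on single-bit inputs into a $\gnd$ tree whose queries are $\epsilon$-noisy versions of individual bits; I will similarly simulate $\Pi$ by an oblivious decision tree whose $k$ input variables are the cluster vectors in $(\{0,1\}^n)^k$, with queries returning noisy versions of whole cluster vectors. The key twist is bundling: since each input node $v \in I_j$ transmits at most $d$ times, the information leaked about $v$'s input bit through all of its transmissions should be captured by a single query whose noise per coordinate is $\epsilon^d$, which is exactly the parameter claimed in the lemma.

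First I would normalize $\Pi$ so that each input node in $I_j$ merely transmits its own input bit on each of its $d$ turns. This is essentially without loss of generality because any nontrivial processing that $v \in I_j$ performs on bits it has received can be deferred to the recipients of its outgoing messages, and the only genuinely new information $v$ injects into the network at each of its transmissions is a noisy copy of its one input bit. I would then walk through the transcript of $\Pi$ in chronological order and build $\cT$ level by level. A transmission originating from an auxiliary node in $A_0 \cup A_1 \cup \cdots \cup A_k$ depends only on previously simulated noisy receptions, and can be folded into the deterministic $g'_v$ computation at the current tree node without introducing a new query. A transmission originating from an input node in $I_j$ triggers a query to variable $j$ with a fresh index $\lambda \in \Lambda$; the relevant coordinate of the returned vector $x_j \oplus z_{j,\lambda}$ supplies the clean value from which the protocol's downstream noisy receptions at all neighbors of the transmitter are generated. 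Since the nodes of $I_j \cup A_j$ together make at most $D$ transmissions, and each transmission in $\Pi$, together with any final read-off at $v^*$, contributes only a constant number of queries to variable $j$, the total query count per variable is at most $3D$.

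The main obstacle is verifying that a single query returning $x_j \oplus z_{j,\lambda}$ with $z_{j,\lambda} \sim \cB(n,\epsilon^d)$ provides $\cT$ with at least as much information as a protocol receiver obtains from the collection of $\epsilon$-noisy transmissions originating in cluster $j$. Intuitively this should hold because each input bit in cluster $j$ is transmitted at most $d$ times, and the probability that all $d$ independent $\epsilon$-noisy copies of a given bit are simultaneously corrupted is exactly $\epsilon^d$, so $\epsilon^d$ is the right ``aggregate'' per-bit parameter. Making this rigorous requires exhibiting a coupling under which the $d$ actual $\epsilon$-noisy copies of each bit in the protocol are generated deterministically from the single vector $x_j \oplus z_{j,\lambda}$ returned by the corresponding tree query, consistently across all $n$ coordinates of the cluster and across multiple interleaved queries each using an independent noise $z_{j,\lambda}$; this is the delicate technical step and is where the constant $3$ in the query count arises, since a few extra queries per transmission are needed to realize the coupling. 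Once such a stochastically faithful simulation is in place, for every decoder $a:C\to[-1,+1]$ the computation of $\cT$ produces an output with the same marginal distribution as $\Pi$, and $\adv_{\oplus,\mu}(\cT) \geq \adv_{\oplus,\mu}(\Pi)$ follows by matching $a$ with the protocol's final declaration.
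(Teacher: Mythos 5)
Your overall strategy --- normalize so input nodes broadcast only raw input bits, absorb the $\leq d$ transmissions per input node into a single $\epsilon^d$-noisy query, and unroll into an oblivious tree --- is the same as the paper's, but your tree construction attaches queries to the wrong agents, and this is a genuine gap, not just an unfilled detail. You have a transmission by an input node $v\in I_j$ trigger a query, and propose generating ``the protocol's downstream noisy receptions at all neighbors of the transmitter'' from the single returned coordinate of $x_j\oplus z_{j,\lambda}$. But the protocol gives each neighbor an \emph{independent} $\epsilon$-noisy copy of $x_v$, and no coupling can manufacture independent $\epsilon$-noisy copies for several distinct receivers from one $\epsilon^d$-noisy query result: the GKS splitting lemma (Lemma~36 of~\cite{Goyal05}) produces $d$ independent copies suitable for \emph{one} receiver that hears $v$ up to $d$ times, which is precisely why the paper indexes the tree noise $z_{j,\lambda}$ by the \emph{receiving} auxiliary node, gives each $w\in A_j$ its own independent $\epsilon^d$-noisy view of $x_j$, and lets the transmissions of \emph{auxiliary} nodes --- not input nodes --- become the queries. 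After the paper's transformations, input-node transmissions do not become tree levels at all; they vanish into the $\epsilon^d$ noise parameter.

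The normalization step also hides real work. You cannot simply ``defer'' an input node's processing to its recipients, because a recipient does not know what $v$ has received and so cannot recompute $v$'s outgoing bit. The paper's fix (Claim~\ref{clm:protocol-to-semi-noisy}) turns $v$ into an auxiliary shadow that noiselessly announces \emph{both} bits $v$ would send, one for each value of its input bit, while a fresh input-only vertex $v'$ noisily broadcasts the actual input; a receiver then selects between the two announced bits using its noisy copy of $v'$'s bit. This is where the factor $3$ in the query bound actually arises (three simulating transmissions per original input-node transmission), not, as you write, from slack in the noisy-copy coupling. You also need, and do not mention, the companion move of making auxiliary-to-auxiliary transmissions noiseless by pushing the channel noise into each receiver's internal randomness; without that, the bit string that a tree path is supposed to encode is not a deterministic function of the queries and the transcript simulation breaks down.
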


\begin{proof}
We will carry out the translation from bounded protocols to $\xnd$ trees via two intermediate models
of communication protocols.

\begin{definition}[Intermediate protocols] \hspace{0.2in}
The following two kinds of protocols are obtained by imposing
restrictions on bounded protocols for decomposed networks of Definition~\ref{def:network-decomposition}.
\begin{description}
\item[Semi-noisy protocol:] An $\epsilon$-noise $(n,k,d,D)$-semi-noisy
protocol differs from an $\epsilon$-noise $(n,k,d,D)$-protocol only in the following
respects.
\begin{enumerate}
\item[(a)] When it is the turn of an input processor to send a message,
it sends only its input bit, whose independent $\epsilon$-noisy copies are
then received by its neighbors.
\item[(b)] A transmission made by an auxiliary processor is not 
subjected to any noise.
\end{enumerate}
\item[Noisy copy protocol:] An $\epsilon$-noise $(n,k,D)$-noisy-copy 
protocol is an $\epsilon$-noise $(n,k,1,D)$-semi-noisy protocol; in other words,
every input processor makes exactly one broadcast of its input bit, so that each 
of its neighbors receives exactly one independent $\epsilon$-noisy copy of this input bit.
\end{description}
\end{definition}

\begin{remark}
In these special kinds of protocols, the messages sent by
the input processors does not depend on the messages these processors
receive.  Thus, we may assume that the input processors make their
transmissions in the beginning of the protocol an appropriate number of
times, and after that the auxiliary processors interact according to
a zero noise protocol.
\end{remark}

\begin{claim}[From  bounded protocol to semi-noisy] 
\label{clm:protocol-to-semi-noisy}
For every function $f:(\{0,1\}^n)^k \rightarrow \{+1,-1\}$,
distribution $\mu$ on $(\{0,1\}^n)^k$ and every $\epsilon$-noise
$(n,k,d,D)$-protocol $\Pi$, there is an $\epsilon$-noise
$(n,k,d,3D)$-semi-noisy protocol $\Pi_1$ such that $\adv_{f,\mu}(\Pi)
\leq \adv_{f,\mu}(\Pi_1)$.
\end{claim}

\begin{claim}[From semi-noisy to noisy-copy]
\label{clm:semi-noisy-to-noisy-copy}
For every function $f:(\{0,1\}^n)^k \rightarrow \{+1,-1\}$,
distribution $\mu$ on $(\{0,1\}^n)^k$ and every $\epsilon$-noise
$(n,k,d,D)$-semi-noisy protocol $\Pi_1$, there is an $\epsilon^d$-noise
$(n,k,D)$-noisy-copy protocol $\Pi_2$ such that
$\adv_{f,\mu}(\Pi_1) \leq \adv_{f,\mu}(\Pi_2)$.
\end{claim}

\begin{claim}[From noisy-copy to $\xnd$ tree]
\label{clm:noisy-copy-to-xnd}
For every function $f:(\{0,1\}^n)^k \rightarrow \{+1,-1\}$,
distribution $\mu$ on $(\{0,1\}^n)^k$ and every $\epsilon$-noise
$(n,k,D)$-noisy-copy protocol $\Pi_2$, there is an
$(n,k,D,\epsilon)$-$\xnd$ tree $\cT$ such that 
$\adv_{f,\mu}(\Pi_2) \leq \adv_{f,\mu}(\cT)$.
\end{claim}

Lemma~\ref{lem:protocol-to-xnd} follows immediately from  
Claims~\ref{clm:protocol-to-semi-noisy},~\ref{clm:semi-noisy-to-noisy-copy} 
and~\ref{clm:noisy-copy-to-xnd}.
\end{proof}

\begin{proof}[Proof of Claim~\ref{clm:protocol-to-semi-noisy}.] 
Fix an $\epsilon$-noise $(n,k,d,D)$-protocol $\Pi$ on a graph $G$. We will construct an
$\epsilon$-noise $(n,k,d,3D)$-semi-noisy protocol $\Pi_1$ on a graph $G_1=(V_1,E_1)$. The graph
$G_1$ will contain $G$ as a subgraph; however, all vertices inherited
from $G$ will correspond to auxiliary processors. In addition, for
each input vertex $v$ of $G$, we will have a new input vertex $v'$ in
$G_1$, which will be connected to $v$ and its neighbors in $G$.
Let $(I = \bigcup_{j=1}^k I_j, A = A_0 \cup \bigcup_{j=1}^k A_j)$ be
the decomposition corresponding to $\Pi$. The decomposition
corresponding to $\Pi_1$ will be $(I' = \bigcup_{j=1}^k I'_j, A' = A_0
\cup \bigcup_{j=1}^k A'_j)$, where $I'_j= \{v': v \in I_j\}$ and $A'_j
= A_j \cup I_j$.  

Suppose $\Pi$ uses $T$ transmissions. For $i=1,2,\ldots,T$ and $v \in V(G)$, let
$b_v[i]$ be the bit received by $v$ when the $i$-th transmission is made;
if $v$ does not receive the $i$-th transmission, we define $b_v[i]$ to be $0$.
The protocol $\Pi_1$ for simulating $\Pi$ will operate in $T$ stages,
one for each transmission made by $\Pi$.  The goal is to ensure that in
the end each auxiliary processor $v$ of $G_1$ constructs a sequence
$b'_v \in \{0,1\}^T$, such that $\langle{b'_v: v \in V(G)} \rangle$ and
$\langle{ b_v : v \in V(G) }\rangle$ (of the protocol $\Pi$) have the
same distribution, for every input in $(\{0,1\}^n)^k$.  This implies
that the outputs of $\Pi'$ and $\Pi$ have the same distribution.
Suppose the first $\ell-1$ stages have been successfully simulated and
$\langle{ b'_v[1,\ldots,\ell-1]: v \in V(G)} \rangle$ have been
appropriately constructed. We now describe how stage $\ell$ is
implemented and $\langle{b'_v[\ell]: v\in V(G) \rangle}$ are
constructed.  If the $\ell$-th transmission in $\Pi$ is made by an
auxiliary processor $v$ in $G$, then it will be simulated in $\Pi_1$
using one noiseless transmission from $v$; if the $\ell$-th transmission is made by an
input vertex $v$ of $G$, then it will be simulated in $\Pi_1$ using two
(noiseless) transmissions from $v$ and one $\epsilon$-noisy transmission
from the corresponding (newly added) input vertex $v'$.

\vspace{0.1in}

{\it $v$ is an auxiliary vertex in $G$}: \hspace{0.2in}
The auxiliary vertex $v$ in $G_1$ operates exactly in
the same fashion as in $G$, and sends a bit $b$, which is received 
without error by all its neighbors. Each neighbor $w \in V(G)$ of $v$
independently sets its bit $b'_w[\ell]$ to be an $\epsilon$-noisy copy of $b$
(using its internal randomness).

\vspace{0.1in}

{\it $v$ is an input vertex in $G$}: \hspace{0.2in} 
The auxiliary vertex $v$ in $G_1$ has all the information that the corresponding input vertex $v$ in $G$
would have had, except the input (which is now given to the new input
vertex $v'$) . So, $v$ transmits (with no noise) two bits, $b_0$ and
$b_1$, corresponding to the two possible input values that $v'$ might
have. Next, the input vertex $v'$ transmits its input $c$; let $c_w$
denote the $\epsilon$-noisy version of $c$ that the neighbor $w \in V(G)$
receives.  Each neighbor $w$ of $v$ now acts as follows: if
$b_0=b_1$, then it sets $b'_w[\ell]$ to be an $\epsilon$-noisy copy of
$b_0$ (using its internal randomness); if $b_0 \neq b_1$, then it 
sets $b'_w[\ell]$ to $b_{c_w}$.
\end{proof}

\begin{proof}[Proof of Claim~\ref{clm:semi-noisy-to-noisy-copy}.]
Let $\Pi_1$ be an $\epsilon$-noise $(n,k,d,D)$-semi-noisy protocol. As remarked
above, all input processors in a semi-noisy protocol can be assumed to
make their transmissions right in the beginning, after which only the
auxiliary processors operate. Thus, each auxiliary processor receives
at most $d$ independent $\epsilon$-noisy copies of the input from each
input processor in its neighborhood. The following lemma of Goyal,
Kindler and Saks~\cite{Goyal05} shows that a processor can generate $d$
independent $\epsilon$-noisy copies of any input from one
$\epsilon^d$-noisy copy. 

\begin{lemma}[Goyal, Kindler and saks~\cite{Goyal05} (Lemma 36)]
Let $t$ be an arbitrary integer, $\epsilon \in (0, 1/2)$ and $\gamma =
\epsilon^t$. There is a randomized algorithm that takes as input a
single bit $b$ and outputs a sequence of $t$ bits and has the property
that if the input is a $\gamma$-noisy copy of $0$ (respectively of
$1$), then the output is a sequence of independent $\epsilon$-noisy
copies of $0$ (respectively of $1$).
\end{lemma}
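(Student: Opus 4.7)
The plan is to construct the algorithm by prescribing, for each possible input $b \in \{0,1\}$, a probability distribution $P_b$ on $\{0,1\}^t$ from which the output is drawn. Since the algorithm sees only $b$ and not the unknown true bit (call it $x$), the same pair $(P_0,P_1)$ must produce the correct output marginal for both $x=0$ and $x=1$. Writing $\mu_x(y) = \prod_{i=1}^t \epsilon^{|y_i - x|}(1-\epsilon)^{1-|y_i - x|}$ for the joint distribution of $t$ independent $\epsilon$-noisy copies of $x$, and noting that $b = x$ with probability $1-\gamma$ and $b = \bar x$ with probability $\gamma$, the requirement becomes the $2 \times 2$ linear system, one per $y \in \{0,1\}^t$:
$$(1-\gamma)P_0(y) + \gamma P_1(y) = \mu_0(y),\qquad \gamma P_0(y) + (1-\gamma)P_1(y) = \mu_1(y).$$

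The coefficient matrix has determinant $1-2\gamma > 0$ since $\gamma = \epsilon^t < 1/2$, so solving uniquely yields
$$P_b(y) = \frac{(1-\gamma)\mu_b(y) - \gamma\mu_{1-b}(y)}{1-2\gamma}.$$
Summing over $y$ gives $\sum_y P_b(y) = 1$ automatically (using that each $\mu_x$ sums to $1$), and independence of the output coordinates is built in because $\mu_x$ is already a product measure. So once the marginals match, everything else follows.

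The only delicate step is showing $P_b(y) \geq 0$ for all $b,y$, and this is the expected main obstacle. A swap of $0$'s and $1$'s gives the symmetry $P_1(y) = P_0(\bar y)$, so it suffices to check $P_0(y) \geq 0$. This amounts to $(1-\gamma)/\gamma \geq \mu_1(y)/\mu_0(y) = ((1-\epsilon)/\epsilon)^{2w(y)-t}$, where $w(y)$ denotes Hamming weight. Since $(1-\epsilon)/\epsilon > 1$, the worst case is $w(y) = t$, which reduces the requirement to $(1-\epsilon^t)/\epsilon^t \geq ((1-\epsilon)/\epsilon)^t$, that is, $\epsilon^t + (1-\epsilon)^t \leq 1$. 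This elementary inequality holds on $[0,1/2]$: the function $\phi(\epsilon) = \epsilon^t + (1-\epsilon)^t$ equals $1$ at $\epsilon = 0$, has derivative $t(\epsilon^{t-1} - (1-\epsilon)^{t-1}) \leq 0$ on $[0,1/2]$, and equals $2^{1-t} \leq 1$ at $\epsilon = 1/2$. (The case $t=1$ is a trivial equality, where the algorithm is just the identity.) This verifies positivity and completes the construction.
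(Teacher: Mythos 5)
Your proof is correct, and it is worth noting that the paper itself does not prove this lemma---it is imported verbatim from Goyal, Kindler and Saks as their Lemma~36, so there is no in-paper argument to compare against. Your construction is a clean, self-contained derivation: the key observation is that the algorithm only sees the noisy bit $b$, so the output distribution must be a function $P_b$ of $b$ alone, and the correctness condition for both possible true bits $x\in\{0,1\}$ forces the $2\times 2$ system you write down, whose determinant $1-2\gamma$ is positive since $\gamma=\epsilon^t<1/2$. Inverting gives the unique candidate
\[
P_b(y) \;=\; \frac{(1-\gamma)\,\mu_b(y) - \gamma\,\mu_{1-b}(y)}{1-2\gamma},
\]
normalization and product structure of the output are automatic once the joint law matches $\mu_x$, and the only nontrivial point is non-negativity, which you correctly reduce (via the symmetry $P_1(y)=P_0(\bar y)$ and monotonicity of the likelihood ratio $\mu_1(y)/\mu_0(y)=((1-\epsilon)/\epsilon)^{2w(y)-t}$ in $w(y)$) to the extremal case $w(y)=t$, i.e., to the elementary inequality $\epsilon^t+(1-\epsilon)^t\le 1$ on $[0,1/2]$. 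All steps check out, including the boundary cases $t=1$ (identity map) and $w(y)=t$ (where $P_0(y)$ may be exactly $0$, which is still a valid probability). This ``solve for the whole output block at once'' argument is arguably more direct than a step-by-step recursive construction (peeling off one $\epsilon$-noisy bit at a time from an $\epsilon^t$-noisy bit), since it reduces everything to a single linear-algebra computation plus one scalar inequality.
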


We  modify the protocol $\Pi_1$ to an $\epsilon^d$-noise $(n,k,D)$-noisy-copy protocol $\Pi_2$ by requiring
that each input processor makes one $\epsilon^d$-noisy transmission of its
input bit. Each auxiliary processor on receiving such a transmission uses its
internal randomness to extract the required $\epsilon$-noisy 
copies. Then onwards the protocol proceeds as before.  
We may now fix internal randomness used by the auxiliary 
processors in such a way that the advantage of the resulting 
protocol for the input distribution $\mu$ 
is at least as good as that of the original protocol. 
Thus, all processors use (deterministic) functions to compute
the bit that they transmit.
\end{proof}

\begin{proof}[Proof of Claim~\ref{clm:noisy-copy-to-xnd}.]
Let $\Pi_2$ be an $\epsilon$-noise $(n,k,D)$-noisy-copy protocol, with the
underlying decomposition $(I = \bigcup_{j=1}^k I_j, A = A_0 \cup
\bigcup_{j=1}^k A_j)$.  We will now show how this protocol can be
simulated using an $(n,k,D,\epsilon)$-$\xnd$ tree $\cT$.  To keep our notation simple, we will
assume (by introducing new edges, if necessary) that (a) all
processors in $A$ are adjacent, and (b) every processor in $A_j$ is
adjacent to every processor in $I_j$. 

Let $T$ be the total number of transmissions in $\Pi_2$.
Let $b_1, b_2, \ldots, b_T$ be the sequence of bits transmitted in $\Pi_2$ by
the auxiliary processors. Suppose, $b_i$ is transmitted by vertex $v \in
I_j$ by computing $g_i(b_1b_2\cdots b_{i-1}, x_j \oplus z_v)$, where
$x_j$ is the the restriction of the input assignment to $I_j$ and
$z_v$ is an $\epsilon$-noisy vector in $\{0,1\}^n$.

The nodes of the $\xnd$ tree $\cT$ are 0-1 sequences of length at most
$T$ (the root is the node at $0$th level and corresponds to the empty sequence). 
The children of the node $b \in \{0,1\}^{i-1}$ $(0 \leq i-1 \leq T-1)$ 
are the two vertices $b0$ and $b1$. Suppose vertex $v \in A_j$ makes the $i$-th
transmission.  The function that $v$ computes to determine what to transmit,
will be used to compute the successor of the nodes at the $i-1$-th
level.  To state this formally, the label of $b \in \{0,1\}^{i-1}$ (at
level $i-1$ in $\cT$) is $(j, h)$, where $h(x_j,z_v) = b \cdot
g_i(b,x_j \oplus z_v)$. (Since our definition requires the function to
return a child of $b$, $h$ returns an extension of $b$ in
$\{0,1\}^i$.)

The set of leaves of $\cT$, $L(\cT)$, is precisely $\{0,1\}^T$. 
Let $a: L(\cT) \rightarrow \{+1,-1\}$ be defined by $a(b_1b_2\cdots b_T) =
(-1)^{b_T}$. Then, it follows from our definitions that
\begin{eqnarray*}
\adv_{\oplus, \mu}(\cT) &\geq& |\E[ \oplus(x) a(\cT(x))]| \\
                        & = &  |\E[ \oplus(x) (-1)^{b_T}]| \\
                        & = &  \adv_{\oplus, \mu} (\Pi_2).
\end{eqnarray*}
\end{proof}

\subsection{Tree rearrangement}
\label{sec:tree-rearrangement}

Our main observation in this section is that oblivious decision trees
can be assumed to be ordered when the inputs come from a product
distribution, and we wish to approximate the parity function. To show
this we will describe a method for rearranging an arbitrary oblivious
decision tree so that it becomes ordered.

\begin{definition}[Tree rearrangement] \hspace{0.2in}
Let $\cT$ and $\cT'$ be oblivious decision trees for the same set of
inputs.  We say that $\cT'$ is a rearrangement of tree $\cT$ if
\begin{itemize}
\item both trees query each variable the same number of times;
\item the functions labelling vertices of $\cT'$ also appear in
  $\cT$ (up to obvious renaming of children);
 formally, for every vertex $\hv$ in $\cT'$ labelled $(i,
  \hg)$, there is a vertex $v$ in $\cT$ labelled $(i,g)$ in $T$ and a
  bijection $\pi:C_{\hv} \rightarrow C_v$ such that 
  $\forall x \in S_i: \hg(x) = \pi(g(x))$.
\end{itemize}
\end{definition}

\begin{lemma}[Ordering lemma] \hspace{0.2in}
\label{lem:tree-ordering}
Let $\mu$ be a product distribution on some set $S^k$. Let $f: S^k \rightarrow \{+1,-1\}$ 
be of the form $f(x_1,x_2,\ldots,x_k) = h(x_1)h(x_2) \cdots h(x_k)$, where $h: S \rightarrow \{+1,-1\}$. 
Then every oblivious decision tree $\cT$ can be rearranged to obtain an ordered oblivious 
decision tree $\hT$ such that $\adv_{f,\mu}(\hT) \geq \adv_{f,\mu}(\cT).$
\end{lemma}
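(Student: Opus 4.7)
The plan is to prove the Ordering Lemma via an inductive bubble-sort argument: I will show that any two adjacent levels of $\cT$ querying distinct variables can be swapped without decreasing the advantage, and then iterate to bring all queries to each variable into consecutive levels. The starting point is a closed-form expression for the advantage. For each leaf $\ell \in L(\cT)$ and each $i \in [k]$, let $A_{\ell,i} \subseteq S$ be the set of $x_i$ values consistent with the queries to $x_i$ on the path from the root to $\ell$, and set $u_{\ell,i} := \E[h(X_i)\,\mathbf{1}[X_i\in A_{\ell,i}]]$. Since $\mu$ is a product measure and $f(x_1,\ldots,x_k)=\prod_i h(x_i)$, the events $\{X_i\in A_{\ell,i}\}$ are independent, and a direct calculation yields
\[
\adv_{f,\mu}(\cT) \;=\; \sum_{\ell\in L(\cT)} \prod_{i=1}^{k} |u_{\ell,i}|.
\]
Hence the advantage depends only on the multiset of per-variable restriction tuples $\{(A_{\ell,i})_{i=1}^{k}\}_{\ell}$ induced by $\cT$.

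For the swap step, fix adjacent levels $\ell$ and $\ell+1$ querying variables $i \neq j$. The local structure at each vertex $v$ at level $\ell-1$ is: $v$ queries $x_i$ via $g_v$ (with children $v_c$, $c\in C_v$), each $v_c$ queries $x_j$ via $g_{v_c}$ (with grandchildren $v_{c,c'}$, $c'\in C_{v_c}$), and below each grandchild hangs a subtree $\cT_{c,c'}$. I will replace this block by one in which $v$ first queries $x_j$ via the function $g_{v_{c^*}}$ for a specific choice $c^*\in C_v$, each resulting child then queries $x_i$ via $g_v$, and the new grandchild $(c',c)$ is connected to $\cT_{c,\sigma_c(c')}$ through a bijection $\sigma_c:C_{v_{c^*}}\to C_{v_c}$. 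Both functions used in the new block ($g_v$ and $g_{v_{c^*}}$) appear in $\cT$, so the rearrangement constraints are met, and after the swap the block queries $x_j$ at level $\ell$ and $x_i$ at level $\ell+1$.

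The crux is the choice of $c^*$ and the bijections $\sigma_c$. Using the advantage formula, the contribution of the block to the advantage of $\cT'$ versus $\cT$ differs only in the factor $|u_{\ell,j}|$: in the original block it is cut by $g_{v_c}$, whereas in the new block it is cut by $g_{v_{c^*}}$ (evaluated at $\sigma_c^{-1}(c')$). All other factors $|u_{\ell,i}|$ and $|u_{\ell,i''}|$ for $i''\neq i,j$ are unchanged because they are determined by $g_v$'s outcome and the descendant subtree, both of which are preserved. A weighted averaging inequality exploiting the multiplicative form $\prod_{i}|u_{\ell,i}|$ then guarantees that $c^*$ and the $\sigma_c$ can be chosen so that the new block's contribution is at least as large as the old one's. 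Iterating the swap finitely many times brings all queries to each variable together and yields an ordered oblivious tree $\hT$ with $\adv_{f,\mu}(\hT)\geq \adv_{f,\mu}(\cT)$.

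The main obstacle is the swap step itself. A priori, the functions $g_{v_c}$ for different $c\in C_v$ induce genuinely different partitions of $S$, so reordering the two queries while exactly preserving every $A_{\ell,j}$ would require introducing a composite function with output $(g_{v_c}(x_j))_{c\in C_v}$—a function that does not appear anywhere in $\cT$ and would violate the rearrangement definition. Selecting a single representative $c^*$, rather than combining all the $g_{v_c}$, is precisely what dispenses with such composite functions; and the product structure of the advantage formula—which in turn relies on $f$ being a product function under a product distribution—is precisely what makes the corresponding averaging inequality go through. Without these structural assumptions on $f$ and $\mu$, the dominating inequality would not be available and the swap could strictly decrease the advantage.
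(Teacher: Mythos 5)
Your closed-form expression $\adv_{f,\mu}(\cT) = \sum_{\ell \in L(\cT)}\prod_{i=1}^k |u_{\ell,i}|$ is correct and is a clean way to package the product structure, but the central claim of the bubble-sort plan---that \emph{any} two adjacent levels querying distinct variables can be swapped without decreasing the advantage---is false, and the ``weighted averaging inequality'' you invoke does not exist. The problem is exactly the one you flag as ``the main obstacle'': once you replace each branch-dependent $g_{v_c}$ (acting on $x_j$) by a single representative $g_{v_{c^*}}$, the \emph{downstream} queries to $x_j$ in the subtrees $\cT_{c,\cdot}$ were tuned to the old partition induced by $g_{v_c}$, and no choice of $c^*$ or relabeling $\sigma_c$ repairs the misalignment in general.

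A concrete counterexample. Take $k=2$, $S=\{0,1\}^4$, $\mu$ uniform on $S$, and $h(s)=(-1)^{s_1 s_2 + s_3 s_4}$. Let $A,B,C,D$ denote the binary partitions of $S$ by the coordinates $s_1,s_3,s_2,s_4$ respectively. Build the depth-$3$ oblivious tree on inputs $(x_1,x_2)\in S^2$: level $1$ queries $x_1$ via $g(s)=\mathbf{1}[h(s)=-1]$; under outcome $0$, level $2$ queries $x_2$ via $A$ and level $3$ via $C$; under outcome $1$, level $2$ queries $x_2$ via $B$ and level $3$ via $D$. Writing $W(P)=\sum_{\text{cells }Q\in P}\bigl|\E[h(X_2)\mathbf{1}[X_2\in Q]]\bigr|$, a direct computation gives $W(A\vee C)=W(B\vee D)=\tfrac12$ while $W(A\vee D)=W(B\vee C)=\tfrac14$. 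With $|u_1(0)|=\tfrac58$, $|u_1(1)|=\tfrac38$, the original advantage is $\tfrac58\cdot\tfrac12+\tfrac38\cdot\tfrac12=\tfrac12$. After swapping levels $1$ and $2$ with $c^*=0$ (so both branches use $A$ at the top) the advantage becomes $\tfrac58\cdot W(A\vee C)+\tfrac38\cdot W(A\vee D)=\tfrac{13}{32}<\tfrac12$; with $c^*=1$ it becomes $\tfrac58\cdot W(B\vee C)+\tfrac38\cdot W(B\vee D)=\tfrac{11}{32}<\tfrac12$. The bijections $\sigma_c$ are irrelevant here because within each branch the two level-$3$ functions coincide. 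So every legal swap strictly decreases the advantage.

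This is precisely why the paper's argument does not attempt a general adjacent transposition. The paper first merges adjacent same-variable levels into superqueries, then locates the \emph{last} alternation and applies a ``move-to-root'' step only to a variable whose relevant occurrence is at the bottom level of the subtree in question --- i.e.\ a variable that is not queried again below. In that situation there is no downstream misalignment, the relevant conditional expectation factors, and picking the single best branch $v^*$ (the analogue of your $c^*$) is sound. The extra bookkeeping with a minimal counterexample (fewest alternations, last alternation farthest from the root) exists exactly to ensure the move-to-root step is always applied in this safe configuration. To repair your approach you would have to restrict your swap to the case where the variable being pushed up is not queried below the swapped pair, which effectively collapses back to the paper's strategy.
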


This lemma will follow immediately from the following claim.

\begin{claim}[Move to root] \hspace{0.2in}
\label{clm:move-to-root}
 Let $\mu$ be a product distribution on $S^k$. Let $f: S^k \rightarrow \{+1,-1\}$ be of the form
$f(x_1,x_2,\ldots,x_k) = h(x_1)h(x_2)\cdots h(x_k)$, where $h: S \rightarrow \{+1,-1\}$. 
Let $\cT$ be an oblivious decision tree with inputs in $S^k$
such that the input $x_n$ is queried only at the level just above the
leaves. Then, $\cT$ can be rearranged to obtain a tree $\hT$ where
\begin{enumerate}
\item the input $x_k$ is queried only at the root;
\item for all $j \neq k$, if $x_j$ was queried at level $r$ of $\cT$,
  then $x_j$ is queried at level $r+1$ of $\hT$;
\item $\adv_{f,\mu}(\hT) \geq \adv_{f,\mu}(\cT)$.
\end{enumerate}
\end{claim}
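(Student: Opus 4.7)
The plan is to construct $\hT$ explicitly by reusing one of the bottom-layer functions of $\cT$ at the root and placing copies of the upper portion of $\cT$ below it. I will use the product structure of $\mu$ and of $f$ to split the expectation into an $X_k$-factor and an $X_{-k}$-factor, and then pick the root function of $\hT$ so that the $X_k$-factor matches the maximum achievable value over the level-$(D-1)$ vertices of $\cT$.

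First, let $\cT'$ denote the tree obtained from $\cT$ by discarding its bottom layer; the ``leaves'' of $\cT'$ are exactly the level-$(D-1)$ vertices $u$ of $\cT$, each carrying a function $g_u:S\to C_u$. I would fix an optimal output labelling $a:\text{leaves}(\cT)\to\{+1,-1\}$ achieving $\adv_{f,\mu}(\cT)$. Using independence of $X_k$ from $X_{-k}$ under the product measure $\mu$, together with the identity $\cT(X)=g_{\cT'(X_{-k})}(X_k)$, I would factor
\begin{equation*}
\E[f(X)a(\cT(X))]=\sum_u q_u\,\rho_u\,\gamma_u,
\end{equation*}
where $q_u=\Pr[\cT'(X_{-k})=u]$, $\rho_u=\E[\prod_{j<k}h(X_j)\mid\cT'(X_{-k})=u]$, and $\gamma_u=\E_{X_k}[h(X_k)a(g_u(X_k))]$.

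Next, choose $u^*$ to be a level-$(D-1)$ vertex of $\cT$ maximising $|\gamma_u|$, label the root of $\hT$ with $(k,g_{u^*})$, and set each subtree below a root-branch $c\in C_{u^*}$ to be a verbatim copy of $\cT'$. Conditions (1) and (2) of the claim are then immediate, and the legality of the rearrangement holds because $g_{u^*}$ labels a vertex of $\cT$ and every other function used in $\hT$ is inherited from $\cT'$. For condition (3), I would define the leaf labelling $\hat a(c,u)=a(c)\,\sign(\rho_u)\in\{+1,-1\}$ on the leaves of $\hT$ (indexed by a root-branch $c$ and a copy-of-$\cT'$ leaf $u$). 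Independence again yields
\begin{equation*}
\E[f(X)\hat a(\hT(X))]=\gamma_{u^*}\cdot\sum_u q_u|\rho_u|,
\end{equation*}
and the triangle inequality combined with the choice of $u^*$ gives $|\sum_u q_u\rho_u\gamma_u|\leq|\gamma_{u^*}|\sum_u q_u|\rho_u|$, which is exactly $\adv_{f,\mu}(\hT)\geq\adv_{f,\mu}(\cT)$.

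The main thing to watch is the rearrangement definition: it constrains the internal labels of $\hT$ to functions appearing in $\cT$, but places no constraint on the output labelling used in computing the advantage, which is what gives us the freedom to choose $\hat a(c,u)=a(c)\sign(\rho_u)$. The factorisation step crucially uses both that $\mu$ is a product distribution and that $x_k$ is queried only at the bottom of $\cT$, so that $\gamma_u$ depends only on $X_k$; once these two observations are in place, the remainder of the argument reduces to the triangle inequality.
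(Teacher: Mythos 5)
Your proof is correct and mirrors the paper's argument essentially step for step: both condition on the penultimate-level vertex reached, factor the expectation into an $X_k$-factor and an $X_{<k}$-factor using the product structure of $\mu$ and $f$, select the bottom-level query maximizing the absolute $X_k$-factor, move it to the root with copies of the truncated tree below, and use a sign-corrected leaf labelling to realize the triangle-inequality bound. Your notation $(q_u,\rho_u,\gamma_u)$ corresponds exactly to the paper's $(\Pr[\rv_t=v],\alpha(v),\beta(v))$, and restricting the output labelling to $\{+1,-1\}$ is without loss of generality since the advantage is linear in the labelling.
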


\begin{proof}
Let $X=\langle{X_1,X_2,\ldots,X_k}\rangle$ take values in $S^k$ with
distribution $\mu$; since $\mu$ is a product distribution the $X_i$'s
are independent. Suppose $\cT$ makes $t$ queries to the input.  Let
$\rv_1, \rv_2, \ldots, \rv_{t+1}$ be the random sequence of vertices
visited by the computation of $\cT$ on input $X$. 
Fix $b: L(\cT) \rightarrow [-1,+1]$ such that 
\begin{eqnarray*}
\adv_{f,\mu}(\cT) & = & |\E[h(X_1)h(X_2)\cdots h(X_k) b(\rv_{t+1})]| \\
                 & = & |\E[\E[h(X_1)\cdots  h(X_k) b(g_{\rv_t}(X_k)) \mid \rv_t]]|.
\end{eqnarray*}

Since $X_k$ is queried only at the end, $h(X_1)\ldots h(X_{k-1})$
and $b(g_{\rv_t}(X_k))$ are independent given $\rv_t$, so 
$\E[h(X_1)\ldots h(X_{k-1}) h(X_k) b(g_{\rv_t}(X_k)) \mid \rv_t]$
$=$ $\E[h(X_1)\ldots h(X_{k-1}) \mid \rv_t] \cdot \E[h(X_k) b(g_{\rv_t}(X_k)) \mid \rv_t]$.

Let $\alpha(v)= \E[h(X_1)\ldots h(X_{k-1}) \mid \rv_t=v]$ and
$\beta(v) = \E[h(X_k) b(g_{\rv_t}(X_k)) \mid \rv_t=v]$.  Let $v^* =
\arg \max \beta(v)$; thus, among the functions labelling vertices that
query $X_k$ (at level $t$), $g_{v^*}$ has the best advantage in the tree
for $h$ under the distribution of $X_k$. It is thus natural to expect
(and not hard to verify) that if we replace all queries to $X_k$ by
this query $g_{v^*}$, the overall advantage can only improve. Once
this is done, the last query does not depend on the previous query,
and can, therefore, be moved to the root. We now present the argument
formally.  We have,
\begin{eqnarray}
\label{eqn:best-query}
\adv_{f,\mu}(\cT) & = & |\E[\alpha(\rv_t) \beta(\rv_t)]| \\
                  & \leq & \E[|\alpha(v_T)|] \cdot |\beta(v^*)|. \nonumber
\end{eqnarray}

We are now ready to describe the rearrangement of $\cT$. Let $\cT^-$
be the subtree of $\cT$ consisting of the first $t$ levels of
vertices; thus vertices where $X_k$ is queried in $\cT$ become leaves
in $\cT^-$.  We first make $|C_{v^*}|$ copies 
of $\cT^-$; we refer to these copies by $\cT^{-}_c$ $(c \in C_{v^*})$, 
and assume that the root of $\cT^{-}_c$ is renamed
$c$. In the new tree $\hT$, we have a root with label $\langle{k,g_{v^*}}\rangle$ 
which is connected to the subtrees $\cT^{-}$.  We
claim that $\adv_{f,\mu}(\hT) \geq \adv_{f,\mu}(\cT)$. Indeed, consider
the function $\hb: L(\hT) \rightarrow [-1,+1]$ that takes the value
$\sign(\alpha(v))b(c)$ on the leaf in $\cT^-_c$ corresponding to
$v \in L(\cT^{-})$. Then, we have
\begin{eqnarray}
\label{eqn:new-tree}
\adv_{f,\mu}(\hT) & \geq & |\E[h(X_1)h(X_2)\cdots h(X_k) \hb(\hv_T)]| \\
                 & = & \E[|\alpha(\rv_T)|] \cdot |\beta(v^*)|. \nonumber
\end{eqnarray}

Claim~\ref{clm:move-to-root} now follows by combining (\ref{eqn:best-query}) and (\ref{eqn:new-tree}).
\end{proof}

We are now ready to show how trees computing the parity function
can be reordered, and prove Lemma~\ref{lem:tree-ordering}. The argument
essentially involves repeated application of Claim~\ref{clm:move-to-root}
to place all queries made to a variable in adjacent
levels. We state the argument formally by considering a carefully
defined {\em minimal counterexample}.

\begin{proof}[Proof of Lemma~\ref{lem:tree-ordering}.]
Fix an oblivious decision tree $\cT$. Let the depth $\cT$ be $r$. We
say that there is an {\em alternation at level $\ell \in \{3,\ldots,r\}$}
of $\cT$ if the variable queried at level $\ell$ is queried at a level
before $\ell-1$ but not at level $\ell-1$. Clearly, a tree with no
alternations is an ordered tree. Among all rearrangements of $\cT$,
let $\hT$ be such that
\begin{enumerate}
\item[(P1)] $\adv_{f,\mu}(\hT) \geq \adv_{f,\mu}(\cT)$;
\item[(P2)] among all $\hT$ satisfying (P1), $\hT$ has the
          fewest alternations;
\item[(P3)] among all $\hT$ satisfying (P1) and (P2), 
          the last alternation in $\hT$ is farthest from the root.
\end{enumerate}
We claim that $\hT$ has no alternations.  Let us assume that $\hT$ has
alternations and arrive at a contradiction.  Let $\hT'$ be the tree
obtained from $\hT$ by merging queries on adjacent levels into one
{\em superquery}. That is, if there are $j$ adjacent levels somewhere
in the tree that query $x_i$, with two outcomes, then we replace these
$j$ levels by a single superquery with $2^j$ outcomes.  Note that the
number of alternations in $\hT'$ is the same as in $\hT$. Let $r'$ be
the number of queries in $\hT'$. We consider two cases:

\vspace{0.1in}

{\it $\cT'$ does not have an alternation at level $r'$}: \hspace{0.2in}
Let $x_1$ be the variables queried at level $r'$.  By Claim~\ref{clm:move-to-root}, we
obtain a tree $\hT''$ where the superquery to $x_1$ appears only at the
root, and all other superqueries are shifted one level down.  Now,
however, if each superquery in $\hT''$ is replaced by its
corresponding subtree of queries from $\hT$, then we obtain a
rearrangement of $\hT$ satisfying (P1) and (P2), but with alternation
at a level farther from the root, contradicting (P3).

\vspace{0.1in}

{\it $\cT'$ has an alternation at level $r'$}: \hspace{0.2in}
Suppose $x_1$ is queried at level $r'$, and the previous query to $x_1$ is at level
$r''< r'$ (with no queries to $x_1$ in the levels $r''+1,
r''+2,\ldots, r'-1$). Now, we apply Claim~\ref{clm:move-to-root} to the
subtrees of $\cT'$ rooted at level $r''+1$, thereby obtaining a
rearrangement $\hT''$, where $x_1$ is now queried at levels $r''+1$
instead of at level $r'$.  Clearly, the resulting tree $\hT''$ has
fewer alternations than $\hT'$. Furthermore, if each superquery in
$\hT''$ is replaced by its corresponding tree of queries from $\hT$,
we obtain a rearrangement of $\hT$. It can be verified that this
rearrangement has advantage at least no worse than $\hT$ but has fewer
alternations---contradicting (P2).
\end{proof}

\subsection{Obtaining the read-once decision tree}

\begin{proof}[Proof of Lemma~\ref{lem:protocol-to-read-once-tree-translation}.]
By combining Lemmas~\ref{lem:protocol-to-xnd} and~\ref{lem:tree-ordering}, we
see that $\Pi$ can be converted into an ordered $(n,k,3D,\epsilon^d)$-$\xnd$ tree. 
Since this tree is ordered all queries to any particular variable appear in consecutive levels. 
In our final tree we will combine all these queries into a single query. In 
particular, if there are $\ell \leq 3D$ levels that query $(x_i,z_i)$,
then we collapse them, so as to yield a single query with $2^{\ell}$
outcomes. Note, however, that the result of this query depends not
only on the real input in $x_i \in \{0,1\}^n$ but also on the noise
variable $z_i$. In the final noisy decision tree $\cT$, we regard this
superquery $g(x_i)$ as a noisy function of the input $x_i$, with $z_i$
providing the internal randomness for its computation. Since $g(x_i)$
was derived from an $(n,1,\ell,\epsilon^d)$-$\xnd$ tree with $\ell \leq 3D$ , 
it follows from the definition of $\alpha_{\mu}(n,3D,\epsilon^d)$ that
$\adv_{\oplus,\mu}(g) \leq \alpha_{\mu}(n,3D,\epsilon^d)$.
\end{proof}
\section{Analysis of read-once decision trees}
\label{sec:read_once_tree_advantage}

In this section, we will prove Lemma~\ref{lem:read-once-tree-advantage}.
We will make use of the following proposition.

\begin{proposition}  
\label{prop:adv}
Let $X$ be a random variable taking values in $\{0,1\}^n$ with
distribution $\mu$. Then, for all $f:\{0,1\}^n \rightarrow \{+1,-1\}$, 
$\cA:\{0,1\}^n \rightarrow C$ and $a: C \rightarrow \reals$, 
\[ |\E[f(X)a(\cA(X))]| \leq |a| \cdot \adv_{f,\mu}(\cA), \] 
where $|a| = \max_{c \in C} |a(c)|$.
\end{proposition}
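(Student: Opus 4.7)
The plan is to prove Proposition~\ref{prop:adv} by a one-line scaling argument. The definition of $\adv_{f,\mu}(\cA)$ takes a maximum over functions $a : C \to [-1,+1]$, so the proposition is essentially just the extension of this definition to functions $a$ with arbitrary bounded range by homogeneity of the expectation in $a$.

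Concretely, I would set $M = |a| = \max_{c \in C}|a(c)|$ and, assuming $M > 0$ (the case $M=0$ is trivial since then $a \equiv 0$ and both sides vanish), define $\tilde{a} = a/M$. Then $\tilde{a} : C \to [-1,+1]$, so $\tilde{a}$ is an admissible function in the maximum defining $\adv_{f,\mu}(\cA)$, which immediately yields
\[
\left|\E\bigl[f(X)\tilde{a}(\cA(X))\bigr]\right| \leq \adv_{f,\mu}(\cA).
\]
Multiplying both sides by $M = |a|$ and using linearity of expectation gives
\[
\left|\E\bigl[f(X)a(\cA(X))\bigr]\right| = M \cdot \left|\E\bigl[f(X)\tilde{a}(\cA(X))\bigr]\right| \leq |a| \cdot \adv_{f,\mu}(\cA),
\]
as required.

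There is no real obstacle here; the proposition is a routine normalization statement about the definition of advantage, and the only subtlety is to handle the $|a| = 0$ case separately (trivial). The proposition is presumably stated because it is convenient in the proof of Lemma~\ref{lem:read-once-tree-advantage}, where one needs to apply the bound to intermediate functions $a$ whose range is the product of several conditional expectations and therefore naturally lies in $[-1,+1]$ but with explicit scaling factors.
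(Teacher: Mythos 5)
Your proof is correct but takes a genuinely different and more elementary route than the paper's. The paper expands $\E[f(X)a(\cA(X))]$ over the fibers of $\cA$, bounds each term by $|a(c)| \cdot |\E[f(X) \mid \cA(X)=c]| \cdot \Pr[\cA(X)=c]$, factors out $\max_{c}|a(c)|$, and then reconstitutes the remaining sum as $\E[f(X)\,b(\cA(X))]$ for the explicit choice $b(c) = \sign(\E[f(X)\mid \cA(X)=c])$, which is an admissible $[-1,+1]$-valued map and hence bounded by $\adv_{f,\mu}(\cA)$. Your normalization $\tilde{a} = a/|a|$ collapses that entire chain to a single application of the definition, and is shorter and cleaner; the only thing the paper's argument buys in exchange for the extra work is that it exhibits the specific $\{\pm 1\}$-valued optimizer $b$, thereby also showing (as a byproduct, though it is not needed for the proposition) that the maximum in the definition of advantage is attained by a sign function. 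Your separate treatment of the $|a|=0$ case is also the right way to make the scaling argument airtight.
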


\begin{proof}
\begin{eqnarray*}
|\E[f(X)a(\cA(X))]| & = & |\sum_{c \in C} \E[f(X)a(\cA(X))|\cA(X) = c] \cdot \Pr[\cA(X) = c] | \\
& \leq & \sum_{c \in C} |a(c)| \cdot |\E[f(X)|\cA(X) = c]| \cdot \Pr[\cA(X) = c] \\
& \leq & \max_{c \in C} |a(c)| \cdot \sum_{c \in C} |\E[f(X)|\cA(X) = c]| \cdot \Pr[\cA(X) = c] \\
& = & |a| \cdot \sum_{c \in C} \E[f(X)b(\cA(X))|\cA(X) = c] \cdot \Pr[\cA(X) = c] \\
& \leq & |a| \cdot |\E[f(X)b(\cA(X))]| \\
& \leq & |a| \cdot \adv_{f,\mu}(\cA),
\end{eqnarray*}
where $b: C \rightarrow \{+1,-1\}$ is defined as $b(c) = \sign(\E[f(X)|\cA(X) = c])$ for all $c \in C$.
\end{proof}

\begin{proof}[Proof of Lemma~\ref{lem:read-once-tree-advantage}.] \hspace{0.2in}
Fix $b: L(\cT) \rightarrow [-1,+1]$. 
Let $X$ take values in $(\{0,1\}^n)^k$ with distribution $\mu^k$.
We wish to show that 
\[ |\E[f(X)b(\cT(X))]| \leq \alpha^k.\]

Let the (random) sequence of vertices visited by the computation of
$\cT$ on input $X$ be $\rv_1, \rv_2, \ldots, \rv_k, \rv_{k+1}$. For
$i=1,2,\ldots,k$ and $v$ in level $i$ of the tree (at distance $i-1$
from the root) let
\[\alpha_i(v) = \E[h(X_i)h(X_{i+1})\cdots h(X_{k}) b(\rv_{k+1}) \mid \rv_{i} =v].\]

We will show by reverse induction on $i$ that 
$|\alpha_i(v)| \leq \alpha^{k+1-i}$. The claim will then follow by
taking $i$ to be $1$ and $v$ to be the root of $\cT$. For the base case, we
have
\begin{eqnarray*}
\alpha_k(v) &=&  \E[h(X_{k}) b(\rv_{k+1}) \mid \rv_k = v] \\
            &=&  \E[h(X_k) b(g_v(X_k))] \\
            &\leq& \adv_{h,\mu}(g_v) \leq \alpha.
\end{eqnarray*}

For the induction step assume that $i < k$ and that $|\alpha_{i+1}(w)|
\leq \alpha^{k-i}$ for all vertices $w$ in level $i+1$ of the tree
(at distance $i$ from the root). Then, for a vertex $v$ in level $i$,
we have
\begin{eqnarray*}
|\alpha_i(v)| \hspace{-0.1in} &=& \hspace{-0.1in}
       |\E[h(X_i) h(X_{i+1}) \cdots h(X_{k}) b(\rv_{k+1}) \mid \rv_i = v]| \\
&=& \hspace{-0.1in} |\E[h(X_i) \alpha_{i+1}(g_v(X_i))]| \\
&\leq & \hspace{-0.1in} \adv_{h,\mu}(g_v) \cdot \max_w |\alpha_{i+1}(w)| \\
&\leq& \hspace{-0.1in} \alpha^{k+1-i}.
\end{eqnarray*} 
where we used Proposition~\ref{prop:adv} to justify the second last
inequality, and the induction hypothesis to justify the last
inequality.
\end{proof}
\section{Conclusions}
\label{sec:conclusions}

In this paper, we presented the first lower bound result for the realistic model of wireless
communication networks where there is a restriction on transmission power. Any bit sent by a transmitter
is received (with channel noise) only by receivers which are within the transmission radius of the transmitter.
We showed that to compute the parity of $N$ input bits with constant probability of error, we
need $\Omega(N \log \log N)$ transmissions. This result nicely complements the upper bound result of 
Ying, Srikant and Dullerud~\cite{Ying06}, which showed that $O(N \log \log N)$ transmissions are sufficient 
for computing the sum of all the $N$ bits. Our result also implies that the sum of $N$ bits cannot be approximated 
up to a constant additive error by any constant error protocol for $\cN(N,R)$ using $o(N \log \log N)$ transmissions, 
if $R \leq N^{-\beta}$ for some $\beta > 0$.

Although the techniques of network decomposition and translation of bounded protocols to $\xnd$ trees are fairly general,
some crucial parts of our proof are not. In particular, rearrangement of $\xnd$ trees to get ordered $\xnd$ trees and 
analysis of read-once decision trees used the fact that we are trying to compute the parity function. Thus the same proof
does not yield similar lower bounds for other functions like majority. In subsequent work, we have eliminated
the need for these parts of the proof using entirely different arguments. We have thus succeeded in showing lower bound of 
$\Omega( N \log \log N)$ transmissions for computing the majority and other functions. These results also show that one cannot 
approximate the sum of $N$ bits to within an additive error of $N^\alpha$ (for some $\alpha > 0$) using $o(N \log \log N)$ 
transmissions.

\bibliographystyle{alpha}
\bibliography{parity_full}

\end{document}